\newtheorem{theorem}{Theorem}
\newtheorem{assumption}{Assumption}
\newtheorem{corollary}{Corollary}
\newtheorem{lemma}{Lemma}
\newcommand{\st}{{\textrm{subject to}}}
\newcommand{\Escr}{{\mathcal{E}}}
\newcommand{\CI}{{\operatorname{CI}}}
\newcommand{\obf}{{\bf 1}}
\title{\Large Using Exposure Mappings as Side Information in Experiments with Interference}
\author{\normalsize David Choi}
\begin{document}
\maketitle

\begin{abstract}
	Exposure mappings are widely used to model potential outcomes in the presence of interference, where each unit's outcome may depend not only on its own treatment, but also on the treatment of other units as well. However, in practice these models may be only a crude proxy for social dynamics. In this work, we give estimands and estimators that are robust to the misspecification of an exposure model. In the first part, we require the treatment effect to be nonnegative (or ``monotone'') in both direct effects and spillovers. In the second part, we consider a weaker estimand (``contrasts attributable to treatment'') which makes no restrictions on the interference at all.
\vskip.5cm
\noindent {\bf Keywords}: causal inference, interference, network data, exposure model, randomized experiment, peer effects
\end{abstract}

\doublespacing 

\section{Introduction}

With increasing frequency, randomized experiments are being proposed in which the object of study is an interconnected social network or societal system. Examples can be found in disparate domains such as health \citep{hudgens2008toward, miguel2004worms}, politics \citep{bond201261, coppock2014information}, crime \citep{verbitsky2012causal}, developmental economics \citep{banerjee2013diffusion}, and consumer demand \citep{bakshy2012social}. In each of these settings, it is believed that interdependencies between the units may play an important factor in determining individual outcomes, and conversely that local actions may interact and give rise to global phenomena, such as herd immunity \citep{ogburn2017vaccines} or cascading behavior \citep{leskovec2007patterns}. 

The analysis of such experiments poses statistical challenges, particularly when only a single instantiation of the network is available for study, so that the randomization of treatment is over the interconnected units within the network. In this case, the outcome of one unit is likely to be affected by the treatments and outcomes of others. This violates traditional methods for causal inference, which require an assumption of ``no interference between units''. While it is possible to relax this assumption, to do so one must model the underlying dependencies; typically this entails placing bounds on who can influence whom, how such effects might combine, and whether they can cascade over long distances. In some settings this may be an unreasonable modeling burden, with misspecification resulting in possible loss of validity and anti-conservative estimates.

In this paper, we propose new methods for experiments in network settings. These methods will resemble existing ones which use exposure mappings, a popular class of models for network experiments \citep{aronow2017estimating, eckles2017design, manski2013identification, vanderweele2012mapping}. However, unlike most previous approaches, the new methods will produce valid confidence regions (though possibly conservative) even when the misspecification is severe. In most of the paper, this will be accomplished by assuming that the treatment effect is nonnegative \citep{choi2017estimation}. For example, in a vaccination study it might not be reasonable to model the social interactions that would occur under every counterfactual, but it might be more reasonable to assume that withholding the vaccination treatment from a municipality would not improve outcomes in that municipality nor elsewhere through spillovers. At the end of the paper, we will remove the nonnegativity requirement, and consider an inference task that does not require structural assumptions on the underlying social mechanism.

Recent methodological works include experiment design \citep{jagadeesan2017designs}; testing \citep{athey2017exact, pouget2017testing, basse2017exact}; right-censored failure times \citep{loh2018randomization}; doubly robust methods for experiments and observational studies \citep{ogburn2017causal, sofrygin2017semi}; and markov random fields \citep{tchetgen2017auto}. Additionally, see review papers \citep{halloran2016dependent} and  \citep{aral2016networked} (particularly for the table of experiments in the latter). In all of these works, a correctly specified exposure mapping is required for estimation (though not for testing). Other recent works, including \citep{chin2018central, savje2017average}, seek to relax this assumption.

The organization of the paper is as follows: Section \ref{sec: formulation} describes a motivating example and gives the problem formulation. Section \ref{sec: method} describes a method assuming nonnegative treatment effects, with a simulation study in Section \ref{sec: sims} and data analysis example in Section \ref{sec: real data}. Section \ref{sec: nonmonotone} weakens the estimand and removes the assumption of nonnegative treatment effects. Proofs are contained in the appendices.

\section{Motivating Example and Problem Formulation} \label{sec: formulation}

\subsection{Motivating Example}

As a concrete example, we begin by describing an experiment studied in \citep{miguel2004worms}, whose data will be analyzed in Section \ref{sec: real data} as a demonstration of the proposed method. The experiment was a deworming project carried out in 1998 in Busia, Kenya, in order to reduce the number of infections by parasitic worms in young children. Schools in group 1 received free deworming treatments beginning in 1998, while group 2 did not. Students were surveyed one year later, and substantially fewer infections were found in the treatment-eligible pupils, with 141 infections in group 1 and 506 infections in group 2. However, it is believed that the number of infections in each school was affected not only by its own treatment status, but also that of other nearby schools as well. This is because students that received the deworming treatment were susceptible to re-infection by infected students. 

How might spatial information be beneficial in such a setting? Figure \ref{fig: cartoon} shows a stylized cartoon in which 9 schools are arranged on a 1-dimensional line, with slight grouping into 3 clusters (schools 1-3, 4-6, and 7-9). There seems to be interference due to re-infection; for example, school 5 is treated but next to two untreated schools in its same cluster, and its infection counts are similar to those of the untreated schools. 

\begin{figure}[h!]
	\begin{center}
		\includegraphics[width=4in]{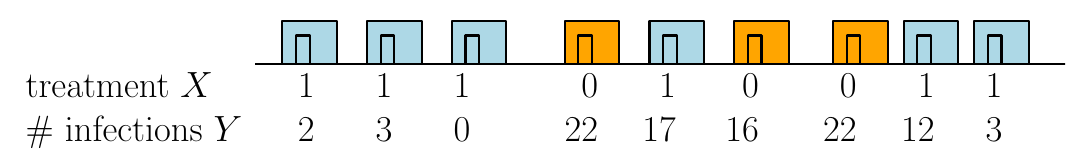}
	\end{center}
	\caption{Stylized cartoon of a hypothetical deworming experiment similar to  \citep{miguel2004worms}, showing 9 schools arranged on a 1-dimensional line. The infection counts are consistent with interference between nearby untreated and treated schools}
		\label{fig: cartoon}
\end{figure}

The hope is that by using this spatial information, it may be possible to ``disqualify'' some treated schools (such as school 5) for being too close to untreated ones, without compromising the randomization inference. To accomplish this, we might assume an exposure mapping. For each of the schools $i=1,\ldots,9$, let $Z_i$ denote the variable 
\[ Z_i = \begin{cases}1 & \text{$i$ treated, and $i\!-\!1$ and $i\!+\!1$ are treated or are not in $i$'s cluster} \\ 0 & \text{otherwise} \end{cases} \]
 For the experiment shown in Figure \ref{fig: cartoon}, it can be seen that restricting to the schools for which $Z_i=1$ (schools 1, 2, 3, and 9) greatly lowers the infection count, at the cost of a reduced sample size. Confidence intervals could then be computed using \citep{aronow2017estimating}, under the crucial assumption that $Z_i$ is correctly specified\footnote{In addition, we would also require $Z_i$ to be specified before the treatment and outcomes are observed, to avoid ``data snooping''.}; in particular, we would require that the potential outcomes of each school $i$ can be written as a function involving only $Z_i$. This could be a strong assumption; for example, our specification for $Z_i$ does not allow for the possibility that if school $i-2$ is untreated while schools $i-1$ and $i$ are both treated, then school $i-2$ might reinfect school $i-1$, which might then re-infect school $i$, and so forth leading to a cascade. Recent work such as \citep{savje2017average, chin2018central} seek to relax this; these require the interference to be asymptotically small (or ``sparse''), but still disallow the abovementioned cascade possibility. 

In the following sections, we will formalize the problem statement, and propose a new method that is able to use a hypothesized exposure mapping, without assuming correct specification of the generative model.

\subsection{Problem Formulation} 

Motivated by the deworming experiment of \citep{miguel2004worms}, we consider a randomized experiment on $N$ schools. Let $X = (X_1,\ldots,X_N)$ denote the treatment assignments, where $X_i=1$ denotes treatment and $X_i=0$ denotes control for the $i$th unit. Let $Y = (Y_1,\ldots,Y_N)$ denote the observed outcome -- specifically, let $Y_i$ denote the number of observed infections at school $i$. 

We will assume that units are assigned independently to treatment, with probability $\rho$:
\begin{equation}\label{eq: X}
X_i \stackrel{iid}{\sim} \operatorname{Bernoulli}(\rho),\qquad i=1,\ldots,N.
\end{equation}
We do not assume SUTVA, but instead allow for interference between units, and let each $Y_i$ be an idiosyncratic function of all $N$ treatment assignments so that
\begin{equation}\label{eq: Y}
Y_i = f_i(X_1,\ldots,X_N),\qquad i=1,\ldots,N,
\end{equation}
for some collection of functions $f_1,\ldots,f_N$. Let $\theta = (\theta_1,\ldots,\theta_N)$ denote the counterfactual outcome under full treatment, 
\begin{equation}\label{eq: theta}
\theta_i = f_i(1,\ldots,1).
\end{equation}
Our inferential goal (except in Section \ref{sec: nonmonotone}) will be to construct a valid one-sided confidence interval to upper bound $\bar{\theta}$, the number of infections that would have occurred under full treatment:
\begin{equation} \label{eq: bar theta}
\bar{\theta} = \frac{1}{N} \sum_{i=1}^N \theta_i,
\end{equation}
under the following assumption:
\begin{assumption}\label{as: monotone} It holds that
	\begin{equation} \label{eq: monotone}
 0 \leq \theta_i \leq Y_i,\qquad i=1,\ldots,N.
 \end{equation}
\end{assumption}
 Assumption \ref{as: monotone} is a structural assumption on the generative model. A sufficient condition for \eqref{eq: monotone} to hold is for full treatment to achieve the lowest possible number of infections at each school:
\[ \theta_i = \min_{x_{1:N}} f_i(x_1,\ldots,x_N), \qquad i=1,\ldots,N.\]
While untestable, this assumption might be viewed as a reasonable one in some applications; for example, in the deworming study, Assumption \ref{as: monotone} corresponds to the assumption that declining to treat an entire school would not result in fewer infections than full treatment. Additionally, even when such an assumption might be contentious, it might still be preferable (or at least comparable) to existing inferential approaches, such as assuming partial interference or an exposure model. This may be the case when the observed network/spatial information/exposure model is thought to only be a crude proxy for the underlying social mechanisms. 

Assumption \ref{as: monotone} was used previously in \citep{choi2017estimation} to analyze the same experiment of \citep{miguel2004worms}. In that analysis, however, no spatial or network information was used. To use such information, here we will propose a method which combines Assumption \ref{as: monotone} with a potentially misspecified exposure model.

\paragraph{Network/Spatial Information} For each unit $i \in [N]$ let $\eta_i \subset [N]$ the denote the set of units comprised of $i$ plus $i$'s ``neighborhood'', which are those units believed to have greatest influence upon $i$. For example, $\eta_i$ might include those units geographically closest to $i$, or it could be based on previously observed interactions between $i$ and the other units, or some other source of prior knowledge. We emphasize that we will \emph{not} use $\eta_i$ to make additional formal assumptions on the generative process. As a result, if Assumption \ref{as: monotone} holds but $\{\eta_i\}$ is a poor proxy for the underlying generative model, our confidence interval may lose power (resulting in wider intervals), but not validity or coverage. 

\section{Methodology} \label{sec: method}

The organization of this section is the following. 
\begin{enumerate}
	\item In Section \ref{sec: notation}, we introduce the quantity $\CI^{\text{ideal}}$ given by
	\[ \CI^{\text{ideal}} = \hat{\theta} + z_{1-\alpha}\frac{\sqrt{\hat{\sigma}}}{L},\]
	where $L$ is the number of ``treated neighborhoods'', $\hat{\theta}$ is a point estimate for $\bar{\theta}$, and $\hat{\sigma}$ estimates the variance of $L(\hat{\theta} - \bar{\theta})$; see \eqref{eq: L},  \eqref{eq: theta hat}, and \eqref{eq: hat sigma} for precise expressions. This quantity will upper bound $\bar{\theta}$ while utilizing the neighborhoods $\{\eta_i\}$, but will require knowledge of the unobserved counterfactual vector $\theta$. As a result $\CI^{\text{ideal}}$ cannot be evaluated, but will be the basis for the eventual method.
	\item In Section \ref{sec: CLT}, we establish that $\CI^{\text{ideal}}$ is asymptotically a valid $(1-\alpha)$ confidence upper bound on $\bar{\theta}$.	
\item In Section \ref{sec: upper bound}, we show that while $\CI^{\text{ideal}}$ cannot be evaluated (since $\theta$ is unobserved), in some settings it can be upper bounded using the observations $Y$. Specifically, we introduce quantities $\hat{\theta}_Y$ and $\tilde{\sigma}_Y$ which are proxies for $\hat{\theta}$ and $\hat{\sigma}$ using $Y$ instead of the unobserved $\theta$, and show that if
\[ 1 - z_{1-\alpha}\frac{\hat{\theta}_Y}{\sqrt{\tilde{\sigma}_Y}}\cdot \frac{2Np(1-p)}{L} > 0,\]
where $p$ is the probability of treatment for a neighborhood, then the quantity $\CI^{\text{obs}}$ given by
\[ \CI^{\text{obs}} = \hat{\theta}_Y + z_{1-\alpha} \frac{\sqrt{\tilde{\sigma}_Y}}{L}\]
is an upper bound for $\CI^{\text{ideal}}$. As a result, if $\CI^{\text{ideal}}$ is asymptotically valid, then $\CI^{\text{obs}}$ is also an asymptotic $(1-\alpha)$ confidence upper bound for $\bar{\theta}$.
\item Section \ref{sec: variants} presents further discussion.
\end{enumerate}

\subsection{Idealized Network-based Confidence Interval} \label{sec: notation}

For $i=1,\ldots,N$, let $h_i$ denote an exposure mapping which takes the treatment variables $\{X_j: j \in \eta_i\}$ corresponding to $i$'s neighborhood, and returns $0$ or $1$. Two examples are
\begin{enumerate}
	\item $h_i$ returns $1$ if all units in $\eta_i$ are treated:
	\[h_i(\{X_j:j\in \eta_i\}) = \prod_{j \in \eta_i} X_j, \qquad i=1,\ldots,N.\]
	\item $h_i$ returns $1$ if $i$ is treated and at least $d_{\min}$ units in $\eta_i$ are treated:
	\begin{equation} \label{eq: sim h}
	h_i(\{X_j: j \in \eta_i\}) = X_i \cdot 1\left\{ \sum_{j \in \eta_j} X_j \geq d_{\min}\right\}, \qquad i=1,\ldots,N. 
	\end{equation}
\end{enumerate}	
We emphasize that $\{h_i\}$ will {\it not} place strong assumptions on the generative process; our confidence bound on $\bar{\theta}$ will be valid for any choice of $\{h_i\}$, although it will require Assumption \ref{as: monotone}.

Given the observed treatment assignment vector $X$, let $Z = (Z_1,\ldots,Z_N)$ denote the effective treatment of unit $i$:
\begin{equation}\label{eq: Z}
Z_i = h_i(\{X_j: j \in \eta_i\}), \qquad i=1,\ldots,N.
 \end{equation}
We will require that each $Z_i$ has the same probability of equaling 1:
\[\mathbb{P}(Z_i=1)=p, \qquad i=1,\ldots,N.\]
In practice, this will usually mean that the neighborhoods $\eta_i$ are the same size. (In Section \ref{sec: variants}, we discuss why relaxing this constraint is apparently non-trivial). 
%

Let $L$ denote the number of effective treatments,
\begin{equation}\label{eq: L}
L = \sum_{i=1}^N Z_i.
\end{equation}
Let $\hat{\theta}$ denote the average of $\theta$ over the effective treatments,
\begin{equation} \label{eq: theta hat}
\hat{\theta} = \frac{1}{L} \sum_{i=1}^N \theta_i Z_i.
\end{equation}
Let $T$ denote the statistic
\begin{equation}\label{eq: T}
T(Z;\theta) = \sum_{i=1}^N (\bar{\theta} - \theta_i)Z_i,
\end{equation}
which can be seen to equal $L(\bar{\theta} - \hat{\theta})$. The variance of $T$ can be seen to equal
\begin{equation}\label{eq: var T}
\operatorname{Var}(T) = (\theta - \bar{\theta})^T P (\theta - \bar{\theta}),
\end{equation}
where $P \in \mathbb{R}^{N \times N}$ is the matrix of pairwise joint probabilities,
\[ P_{ij} = \mathbb{P}(Z_iZ_j = 1), \qquad i,j = 1,\ldots,N.\]
We will require $P_{ij}$ to be bounded away from zero,
\[\mathbb{P}(Z_i=1, Z_j=1) \geq p_{\min},\qquad i,j=1,\ldots,N.\]
 

We will consider the following estimate of $\operatorname{Var}(T)$, whose consistency will be established in Section \ref{sec: CLT}:
\begin{equation}\label{eq: hat sigma}
\hat{\sigma} = N p(1-p)\left(\frac{1}{L}\sum_{i=1}^N (\theta_i - \hat{\theta})^2 Z_i\right)  + \sum_{i=1}^N \sum_{j=1}^N \theta_i \theta_j \frac{\Escr_{ij}}{P_{ij}} Z_i Z_j,
\end{equation}
where the matrix $\Escr \in \mathbb{R}^{N \times N}$ and its uncentered version $R \in \mathbb{R}^{N \times N}$ are given by
\begin{align}
	R & = P - p(1-p)I - p^2 \obf \obf^T  \label{eq: R} \\
	\Escr & =  \left(I - \frac{ \obf \obf^T}{N}\right)^T R \left(I - \frac{ \obf \obf^T}{N}\right),  \label{eq: Escr}
\end{align}
where $\obf$ is the vector of all ones. It can be seen that $R$ is the difference between $P$ and the second moment matrix of $N$ independent $\operatorname{Bernoulli}(p)$ random variables, and that $\Escr$ is a version of $R$ whose rows, columns, and overall average have all been centered to zero. 

Using these quantities, we will consider the following $1-\alpha$ confidence upper bound on $\bar{\theta}$, whose coverage properties will be established in Section \ref{sec: CLT}:
\begin{equation}\label{eq: ideal CI}
 \CI^{\text{ideal}} = \hat{\theta} + z_{1-\alpha} \frac{\sqrt{\hat{\sigma}}}{L},
\end{equation}
where $z_{1-\alpha}$ is the $(1-\alpha)$ critical value of a standard normal. While $\CI^{\text{ideal}}$ cannot be evaluated in practice (as it requires knowledge of the unobserved vector $\theta$), it will be the basis for the eventual method.

\subsection{Asymptotic Coverage of $\CI^{\text{ideal}}$} \label{sec: CLT}

We assume a sequence of experiments whose components $X$, $Y$, $\{f_i\}$, $\{\eta_i\}$, $\theta$, $Z$ and statistic $T$ are given by \eqref{eq: X} - \eqref{eq: T} while $N\rightarrow \infty$, and which satisfies the following condition:
\begin{assumption}\label{as: CLT}
	There exist constants $B$, $D$, and $c > 0$ such that 
\begin{enumerate}
	\item The counterfactual vector $\theta$ is bounded:
	\[ 0 \leq \theta_i \leq B, \qquad i=1,\ldots,N \]
	\item Each neighborhood $\eta_i$ overlaps with a bounded number of other neighborhoods:
	\[\sum_{j:j\neq i} 1\{|\eta_i \cap \eta_j| > 0 \}  \leq D, \qquad i=1,\ldots,N, \]
	\item The variance of $T$ is lower bounded as a fraction of $N$:
	\[ \frac{1}{N} \operatorname{Var}(T) \geq c.\]
\end{enumerate}
\end{assumption}

Under Assumption \ref{as: CLT}, the following theorem establishes consistent estimation of $\operatorname{Var}(T)$ by $\hat{\sigma}$, and asymptotic normality of $\frac{T}{\sqrt{\operatorname{Var}(T)}}$ and $\frac{T}{\sqrt{\hat{\sigma}}}$:
\begin{theorem}\label{th: CLT}
	Let $p$, $p_{\min}$, $B$, $D$, and $c$ be fixed as $N \rightarrow \infty$, and let Assumption \ref{as: CLT} hold. It follows that 
\begin{enumerate}
	\item The estimate $\hat{\sigma}$ as given by \eqref{eq: hat sigma} converges to $\operatorname{Var}(T)$:
\begin{equation} \label{eq: hat sigma consistency}
\hat{\sigma} = \operatorname{Var}(T)\cdot (1 + o_P(1)).
\end{equation}
	\item The quantities $\frac{T}{\sqrt{\operatorname{Var}(T)}}$ and $\frac{T}{\sqrt{\operatorname{\hat{\sigma}}}}$ both converge in distribution to a standard normal random variable.
\end{enumerate}
\end{theorem}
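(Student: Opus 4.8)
The plan is to treat $T=\sum_{i=1}^N(\bar\theta-\theta_i)Z_i$ as a centered sum of bounded, weakly dependent terms, to verify separately that $\hat\sigma$ consistently estimates its variance, and then to combine the two limit laws through Slutsky's theorem. The structural fact driving everything is that $Z_i=h_i(\{X_j:j\in\eta_i\})$ depends only on $\{X_j:j\in\eta_i\}$, so $Z_i$ and $Z_j$ are independent whenever $\eta_i\cap\eta_j=\emptyset$. Declaring $i\sim j$ exactly when $|\eta_i\cap\eta_j|>0$ therefore equips the summands $(\bar\theta-\theta_i)Z_i$ with a dependency graph whose maximum degree is at most $D$ by the second part of Assumption \ref{as: CLT}. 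Since $\mathbb{E}[T]=p\sum_i(\bar\theta-\theta_i)=0$, these summands are centered, and they are bounded in absolute value by $B$ by the first part of Assumption \ref{as: CLT}.

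For the first convergence in distribution, I would invoke a Stein's-method central limit theorem for sums with a bounded-degree dependency graph. With summands bounded by $B$, dependency degree at most $D$, and $\operatorname{Var}(T)\ge cN$ (the third part of Assumption \ref{as: CLT}), the resulting Berry--Esseen-type bound is of order $D^2B^3N/(cN)^{3/2}+D^{3/2}B^2\sqrt N/(cN)=O(N^{-1/2})$, which vanishes; hence $T/\sqrt{\operatorname{Var}(T)}$ converges to a standard normal. Given consistency of $\hat\sigma$, convergence of $T/\sqrt{\hat\sigma}$ is then immediate from the decomposition $T/\sqrt{\hat\sigma}=(T/\sqrt{\operatorname{Var}(T)})\cdot\sqrt{\operatorname{Var}(T)/\hat\sigma}$ and Slutsky's theorem.

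The consistency of $\hat\sigma$ rests first on an algebraic identity. Writing $\tilde\theta=\theta-\bar\theta\obf$, note that \eqref{eq: var T} reads $\operatorname{Var}(T)=\tilde\theta^TP\tilde\theta$. Substituting \eqref{eq: R} and dropping the rank-one $p^2$ term (which vanishes since $\obf^T\tilde\theta=0$) gives $\operatorname{Var}(T)=\tilde\theta^TR\tilde\theta+p(1-p)\sum_i(\theta_i-\bar\theta)^2$; and because $\Escr$ in \eqref{eq: Escr} satisfies $(I-\obf\obf^T/N)\theta=\tilde\theta$, one has $\tilde\theta^TR\tilde\theta=\theta^T\Escr\theta$, so that
\[ \operatorname{Var}(T)=\theta^T\Escr\theta+p(1-p)\sum_{i=1}^N(\theta_i-\bar\theta)^2 .\]
The two summands of $\hat\sigma$ in \eqref{eq: hat sigma} target the two terms here. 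For the second, the factor $\Escr_{ij}/P_{ij}$ combined with $\mathbb{E}[Z_iZ_j]=P_{ij}$ makes $S:=\sum_{i,j}\theta_i\theta_j(\Escr_{ij}/P_{ij})Z_iZ_j$ an inverse-probability-weighted unbiased estimator of $\theta^T\Escr\theta$. For the first, expanding $\frac1L\sum_i(\theta_i-\hat\theta)^2Z_i=\frac1L\sum_i\theta_i^2Z_i-\hat\theta^2$ and using $L/N\stackrel{P}{\to}p$ and $\hat\theta\stackrel{P}{\to}\bar\theta$ (each a ratio of linear statistics whose variances are $O(N)$ by the degree bound) shows $Np(1-p)\cdot\frac1L\sum_i(\theta_i-\hat\theta)^2Z_i=p(1-p)\sum_i(\theta_i-\bar\theta)^2+o_P(N)$.

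The main obstacle is showing $S=\theta^T\Escr\theta+o_P(N)$, i.e.\ bounding the variance of the quadratic form $S$ by $o(N^2)$. Here $\operatorname{Var}(S)=\sum_{i,j,k,l}a_{ij}a_{kl}\operatorname{Cov}(Z_iZ_j,Z_kZ_l)$ with $a_{ij}=\theta_i\theta_j\Escr_{ij}/P_{ij}$, and $\operatorname{Cov}(Z_iZ_j,Z_kZ_l)=0$ unless $(\eta_i\cup\eta_j)\cap(\eta_k\cup\eta_l)\neq\emptyset$, so the degree bound restricts the local contribution to $O(N)\,\mathrm{poly}(D)$ connected four-tuples. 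The complication is that $\Escr$ is not exactly sparse: the double-centering in \eqref{eq: Escr} adds to the sparse matrix $R$ (which has at most $D$ nonzero off-diagonal entries per row, since $R_{ij}=\operatorname{Cov}(Z_i,Z_j)=0$ when $\eta_i\cap\eta_j=\emptyset$, and $R_{ii}=0$) a dense correction whose entries are $O(D/N)$. I would therefore split $\Escr=R+(\text{centering correction})$ and bound the two contributions to $\operatorname{Var}(S)$ separately: the sparse part contributes $O(N)\,\mathrm{poly}(D)$ using $P_{ij}\ge p_{\min}$ and $\theta_i\le B$, while the dense part is controlled through the $O(D/N)$ smallness of its entries, which suppresses its Frobenius weight enough to keep its variance contribution $o(N^2)$. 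Assembling the pieces gives $\hat\sigma=\operatorname{Var}(T)(1+o_P(1))$, and Slutsky then delivers both limit laws.
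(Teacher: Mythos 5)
Your proposal is correct and follows essentially the same route as the paper: the same variance identity $\operatorname{Var}(T)=p(1-p)\sum_i(\theta_i-\bar\theta)^2+\theta^T\Escr\theta$, the same split of $\Escr$ into the sparse matrix $R$ (at most $D$ nonzero entries per row) plus an $O(D/N)$ dense centering correction, the same bounded-degree dependency-graph CLT for $T/\sqrt{\operatorname{Var}(T)}$, and Slutsky to pass to $T/\sqrt{\hat\sigma}$. The only difference is the concentration tool for $L$, $\hat\theta$, and the quadratic form $S$: you bound variances directly and invoke Chebyshev, whereas the paper applies the Azuma--Hoeffding bounded-difference inequality to the same split; both arguments exploit the identical locality structure and yield the required $o_P(N)$ rates.
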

An immediate corollary of Theorem \ref{th: CLT} is that the $(1-\alpha)$ confidence upper bound on $\bar{\theta}$ given by \eqref{eq: ideal CI} is asymptotically valid:
\begin{corollary} \label{cor: CLT}
	Under the conditions of Theorem \ref{th: CLT}, $\CI^{\text{ideal}}$ as given by \eqref{eq: ideal CI} is an asymptotically valid $(1-\alpha)$ confidence upper bound for $\bar{\theta}$:
	\[\mathbb{P}\left( \bar{\theta} \geq \CI^{\text{ideal}} \right) \leq \alpha + o(1).\]
\end{corollary}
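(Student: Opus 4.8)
The plan is to recognize that Corollary \ref{cor: CLT} is essentially a restatement of the second conclusion of Theorem \ref{th: CLT}, once the defining event is rewritten in terms of the studentized statistic $T/\sqrt{\hat{\sigma}}$. First I would unwind the definition of $\CI^{\text{ideal}}$ in \eqref{eq: ideal CI}. On the event $\{L>0\}$, where $\hat{\theta}$ and $\hat{\sigma}$ are well defined, the inequality $\bar{\theta} \geq \CI^{\text{ideal}}$ is equivalent to $\bar{\theta} - \hat{\theta} \geq z_{1-\alpha}\sqrt{\hat{\sigma}}/L$; multiplying through by $L>0$ and recalling from \eqref{eq: T} that $T = L(\bar{\theta}-\hat{\theta})$ converts this into
\[ \bar{\theta} \geq \CI^{\text{ideal}} \quad\Longleftrightarrow\quad \frac{T}{\sqrt{\hat{\sigma}}} \geq z_{1-\alpha}. \]

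With the event in this form, I would invoke Theorem \ref{th: CLT}, which states that $T/\sqrt{\hat{\sigma}}$ converges in distribution to a standard normal. Since the standard normal CDF $\Phi$ is continuous everywhere, $z_{1-\alpha}$ is a continuity point of the limit law, and convergence in distribution yields
\[ \mathbb{P}\!\left( \frac{T}{\sqrt{\hat{\sigma}}} \geq z_{1-\alpha} \right) \longrightarrow 1 - \Phi(z_{1-\alpha}) = \alpha, \]
where the last equality holds because $z_{1-\alpha}$ is by definition the $(1-\alpha)$ critical value of the standard normal. Combining the two displays gives $\mathbb{P}(\bar{\theta} \geq \CI^{\text{ideal}}) = \alpha + o(1)$, which is exactly the stated bound. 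Note that only part 2 of Theorem \ref{th: CLT} is needed directly here; part 1 (consistency of $\hat{\sigma}$) enters only insofar as it underlies the proof of that convergence.

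The sole technical wrinkle is the degenerate case $L=0$, on which $\hat{\theta}$, $\hat{\sigma}$, and hence $\CI^{\text{ideal}}$ are undefined, so the rearrangement above is only valid on $\{L>0\}$. I would dispose of this by observing that $\mathbb{E}[L] = Np \to \infty$ in the fixed-$p$ regime, while the overlap bound in part 2 of Assumption \ref{as: CLT} forces all but $O(ND)$ of the covariances $\operatorname{Cov}(Z_i,Z_j)$ to vanish (each $Z_i$ depends only on $\{X_j: j\in\eta_i\}$, and the $X_j$ are independent), giving $\operatorname{Var}(L) = O(N)$ with $D$ fixed. A Chebyshev bound then yields $\mathbb{P}(L=0) = O(1/N) = o(1)$, so this event contributes only $o(1)$ to the coverage probability and is absorbed into the error term. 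I do not anticipate any real obstacle: the entire mathematical content sits in the CLT of Theorem \ref{th: CLT}, and this corollary merely translates that limit into a one-sided coverage statement.
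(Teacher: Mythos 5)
Your proposal is correct and follows essentially the same route as the paper: both rewrite the coverage event as $T/\sqrt{\hat{\sigma}} \geq z_{1-\alpha}$ using $T = L(\bar{\theta}-\hat{\theta})$ and invoke the asymptotic normality of $T/\sqrt{\hat{\sigma}}$ from Theorem \ref{th: CLT}. Your additional treatment of the degenerate event $\{L=0\}$ is a small bit of extra care the paper's proof omits, but it does not change the argument.
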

Theorem \ref{th: CLT} and Corollary \ref{cor: CLT} are similar to \citep{aronow2017estimating} and are proven in the appendix.

\subsection{Upper Bounding $\CI^{\text{ideal}}$ Using Observations $Y$} \label{sec: upper bound}

Since $\theta$ is not observed, we cannot evaluate $\CI^{\text{ideal}}$ as given by \eqref{eq: ideal CI} in order to bound $\bar{\theta}$. However, since $\theta \leq Y$ by Assumption \ref{as: monotone}, conceptually we can upper bound the value of $\CI^{\text{ideal}}$ by finding its maximum over all values of $\theta$ allowed by the assumption:
\begin{equation}\label{eq: CI star}
\max_{\theta\in \mathbb{R}^N} \ \hat{\theta} + z_{1-\alpha} \frac{\sqrt{\hat{\sigma}}}{L} \quad \st \quad 0\leq \theta_i \leq Y_i,\ i=1,\ldots,N,
\end{equation}
where $\hat{\theta}$ and $\hat{\sigma}$ are functions of $\theta$ as given by \eqref{eq: theta hat} and \eqref{eq: hat sigma}. 

In general, the optimal choice of $\theta$ to maximize \eqref{eq: CI star} need not be $\theta = Y$; while setting $\theta = Y$ maximizes the point estimate $\hat{\theta}$, it need not maximize the variance estimate $\hat{\sigma}$. (A simulated example is given in Section \ref{sec: sims} using ``adversarial interference''.) As a result, it may be computationally difficult to solve \eqref{eq: CI star}. However, Theorem \ref{th: tilde CI} gives conditions under which a somewhat more conservative bound than \eqref{eq: CI star} can be computed. This bound, in which $\hat{\sigma}$ is replaced by an upper bound $\tilde{\sigma}$, is given by 
\begin{equation}\label{eq: tilde CI}
\max_{\theta\in \mathbb{R}^N} \ \hat{\theta} + z_{1-\alpha} \frac{\sqrt{\tilde{\sigma}}}{L}  \quad \st \quad 0\leq \theta_i \leq Y_i,\ i=1,\ldots,N,
\end{equation}
where $\tilde{\sigma}$ equals
\begin{equation}\label{eq: tilde sigma}
\tilde{\sigma} = Np(1-p)\left(\frac{1}{L}\sum_{i=1}^N (\theta_i - \hat{\theta})^2Z_i \right) + \sum_{i=1}^N \sum_{j=1}^N \theta_i \theta_j \frac{\max(\Escr_{ij},0)}{P_{ij}} Z_i Z_j.
\end{equation}
To see that $\tilde{\sigma}$ upper bounds $\hat{\sigma}$, observe that $\tilde{\sigma}$ replaces each term $\Escr_{ij}$ with $\max(0,\Escr_{ij})$, and  
\[\theta_i \theta_j \Escr_{ij} \leq \theta_i \theta_j \max(0, \Escr_{ij}), \qquad \text{ if } \theta_i,\theta_j \geq 0.\]
Theorem \ref{th: tilde CI} gives conditions under which $\theta = Y$ is known to maximimize \eqref{eq: tilde CI}:
\begin{theorem} \label{th: tilde CI}
	Let $\hat{\theta}_Y$ and $\tilde{\sigma}_Y$ be given by subsituting $\theta=Y$ into the expressions for $\hat{\theta}$ and $\tilde{\sigma}$ given by \eqref{eq: theta hat} and \eqref{eq: tilde sigma}, so that
\begin{align}
	\hat{\theta}_Y & = \frac{1}{L} \sum_{i=1}^N Y_i Z_i \label{eq: hat theta Y}\\
	\tilde{\sigma}_Y & = Np(1-p)\left(\frac{1}{L} \sum_{i=1}^N (Y_i - \hat{\theta}_Y)^2 Z_i \right)	 + \sum_{i=1}^N \sum_{j=1}^N Y_i Y_j \frac{\max(\Escr_{ij},0)}{P_{ij}}Z_iZ_j. \label{eq: tilde sigma Y}
\end{align}
	If it holds that
	\begin{equation}\label{eq: tilde CI condition}
	1 - z_{1-\alpha} \frac{\hat{\theta}_Y}{\sqrt{\tilde{\sigma}_Y}} \cdot \frac{2Np(1-p)}{L} \geq 0,
	\end{equation}
	then letting $\theta = Y$ maximizes \eqref{eq: tilde CI}, so that
	\begin{equation}\label{eq: Y CI}
	\CI^{\text{obs}} = \hat{\theta}_Y + z_{1-\alpha} \frac{\sqrt{\tilde{\sigma}_Y}}{L} 
	\end{equation}
	is an upper bound for $\CI^{\text{ideal}}$.
\end{theorem}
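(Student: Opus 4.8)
The plan is to establish the theorem's substantive claim---that under \eqref{eq: tilde CI condition} the point $\theta=Y$ maximizes \eqref{eq: tilde CI}---and then to deduce $\CI^{\text{obs}}\ge\CI^{\text{ideal}}$ as a corollary. Write the objective of \eqref{eq: tilde CI} as $g(\theta)=\hat{\theta}+\frac{z_{1-\alpha}}{L}\sqrt{\tilde{\sigma}(\theta)}$, and observe that only coordinates with $Z_i=1$ enter $\hat{\theta}$ and $\tilde{\sigma}$, so the feasible region is effectively the box $\prod_{i:Z_i=1}[0,Y_i]$. It then suffices to show $g(Y)\ge g(\theta)$ for every feasible $\theta$.

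First I would record two consequences of $0\le\theta_i\le Y_i$. The point estimate obeys $0\le\hat{\theta}\le\hat{\theta}_Y$, and the off-diagonal ``spillover'' term of $\tilde{\sigma}$ is monotone in $\theta$, i.e. $\sum_{i,j}\theta_i\theta_j\frac{\max(\Escr_{ij},0)}{P_{ij}}Z_iZ_j\le\sum_{i,j}Y_iY_j\frac{\max(\Escr_{ij},0)}{P_{ij}}Z_iZ_j$, since each summand is a product of nonnegative factors that is coordinatewise nondecreasing in $\theta$. The clipping to $\max(\Escr_{ij},0)$ is precisely what makes this term nondecreasing, and hence is what permits $\theta=Y$ to be optimal.

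The crux is to bound $g(Y)-g(\theta)$. I would exploit concavity of $\sqrt{\cdot}$, whose graph lies below its tangent at $\tilde{\sigma}_Y$, giving the global bound $\sqrt{\tilde{\sigma}(\theta)}\le\sqrt{\tilde{\sigma}_Y}+\frac{1}{2\sqrt{\tilde{\sigma}_Y}}\big(\tilde{\sigma}(\theta)-\tilde{\sigma}_Y\big)$, valid since \eqref{eq: tilde CI condition} forces $\tilde{\sigma}_Y>0$. Combined with the monotonicity of the spillover term, this reduces matters to the diagonal ``within-group variance'' part of $\tilde{\sigma}(\theta)-\tilde{\sigma}_Y$; writing $\sum_i(\theta_i-\hat{\theta})^2Z_i=\sum_i\theta_i^2Z_i-L\hat{\theta}^2$ and using $\theta_i^2\le Y_i^2$ bounds this part by $Np(1-p)(\hat{\theta}_Y^2-\hat{\theta}^2)$. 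Assembling yields
\[ g(Y)-g(\theta)\ge(\hat{\theta}_Y-\hat{\theta})\left[1-z_{1-\alpha}\frac{Np(1-p)}{L}\cdot\frac{\hat{\theta}_Y+\hat{\theta}}{2\sqrt{\tilde{\sigma}_Y}}\right]. \]
Since $\hat{\theta}\le\hat{\theta}_Y$ the bracket is at least $1-z_{1-\alpha}\frac{\hat{\theta}_Y}{\sqrt{\tilde{\sigma}_Y}}\cdot\frac{Np(1-p)}{L}$, which \eqref{eq: tilde CI condition} renders nonnegative, while $\hat{\theta}_Y-\hat{\theta}\ge0$, so $g(Y)\ge g(\theta)$. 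The main obstacle lives here: $\tilde{\sigma}$ is a quadratic form that is neither monotone nor convex in $\theta$ (its variance part can exceed its value at $Y$ in the interior), so neither coordinatewise monotonicity of $g$ nor a ``maximum at a vertex'' convexity argument is directly available. The square-root linearization is what converts the unruly $\tilde{\sigma}$ into the monotone surrogate $\hat{\theta}^2$, and replacing $\hat{\theta}_Y+\hat{\theta}$ by $2\hat{\theta}_Y$ is what the factor multiplying $Np(1-p)/L$ in \eqref{eq: tilde CI condition} accommodates.

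It remains to pass from optimality of $\theta=Y$ to the stated bound. By Assumption \ref{as: monotone} the true counterfactual vector is feasible for \eqref{eq: tilde CI}, and since $\tilde{\sigma}\ge\hat{\sigma}$ whenever $\theta\ge0$ (the inequality $\theta_i\theta_j\Escr_{ij}\le\theta_i\theta_j\max(0,\Escr_{ij})$ noted before the theorem), the value of \eqref{eq: ideal CI} at the true $\theta$ is at most $g$ at the true $\theta$, which is at most $g(Y)=\CI^{\text{obs}}$. Hence $\CI^{\text{obs}}\ge\CI^{\text{ideal}}$, as claimed.
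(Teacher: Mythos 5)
Your proof is correct, and it takes a genuinely different route from the paper's. The paper argues by contradiction along the line segment $\theta(t)=Y+t(\theta^*-Y)$ joining $Y$ to a putative better feasible point: it locates the last time $t_0$ at which $\tilde{\sigma}(t)\leq\tilde{\sigma}_Y$, shows that on $[t_0,1]$ the gradient $\nabla f(\theta(t))$ is elementwise nonnegative (using the same derivative computation that your tangent-line step linearizes, together with $\hat{\theta}\leq\hat{\theta}_Y$ and $\tilde{\sigma}\geq\tilde{\sigma}_Y$ on that subinterval) while the direction $\theta^*-Y$ is elementwise nonpositive, and integrates the gradient along the path to reach a contradiction. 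You instead produce a direct global bound: the concavity (tangent-line) inequality $\sqrt{\tilde{\sigma}(\theta)}\leq\sqrt{\tilde{\sigma}_Y}+\tfrac{1}{2\sqrt{\tilde{\sigma}_Y}}(\tilde{\sigma}(\theta)-\tilde{\sigma}_Y)$, coordinatewise monotonicity of the clipped spillover term, and the identity $\sum_i(\theta_i-\hat{\theta})^2Z_i=\sum_i\theta_i^2Z_i-L\hat{\theta}^2$ reduce everything to the factorization $(\hat{\theta}_Y-\hat{\theta})(\hat{\theta}_Y+\hat{\theta})$, after which \eqref{eq: tilde CI condition} makes the bracket nonnegative. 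Your argument avoids the continuity/supremum and path-integral machinery entirely, and it makes visible that only the weaker condition with $Np(1-p)/L$ in place of $2Np(1-p)/L$ is actually needed (the paper's derivative bound, once the $1/2$ from $\tfrac{d}{d\sigma}\sqrt{\sigma}$ cancels the $2$ from the quadratic, in fact requires only the same weaker condition, so neither proof uses the factor of $2$ in \eqref{eq: tilde CI condition}). What the paper's local approach buys in exchange is a template that could extend to settings where no single global concave majorant of the objective is available; your closing reduction from optimality of $\theta=Y$ to $\CI^{\text{obs}}\geq\CI^{\text{ideal}}$ matches the paper's own discussion preceding the theorem.
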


Theorem \ref{th: tilde CI} implies that $\CI^{\text{obs}}$ as given by \eqref{eq: Y CI} is an asymptotically conservative $1-\alpha$ confidence upper bound on $\bar{\theta}$, provided that condition \eqref{eq: tilde CI condition} is met and Assumptions \ref{as: monotone} and \ref{as: CLT} hold. The condition \eqref{eq: tilde CI condition} roughly requires $\sqrt{\tilde{\sigma}_Y}$, which estimates the standard deviation of $L(\hat{\theta} - \bar{\theta})$, to be large compared to $\hat{\theta}_Y$.
\subsection{Discussion} \label{sec: variants}

\paragraph{Variance Lower Bound} Assumption \ref{as: CLT} requires the lower bound 
\[ \frac{1}{N} \operatorname{Var}(T) \geq c, \]
to hold for some constant $c$. Similar to Condition 6 in \citep{aronow2017estimating}, Assumption \ref{as: CLT} is meant to rule out degenerate cases, such as when all values of $\theta$ are identical. In practice, one might wish to informally check this assumption by examining samples from $\theta$, or by estimating $\operatorname{Var}(T)$. However, $\theta$ is not observed in our setting, nor can $\operatorname{Var}(T)$ be estimated. Instead, we can only compute $\tilde{\sigma}_Y$, which does not lower bound $\operatorname{Var}(T)$ in any way. This raises the following question: even if $\tilde{\sigma}_Y$ is large, is it possible that $\operatorname{Var}(T)$ may be too small for the central limit theorem result of Theorem \ref{th: CLT} to hold?

To alleviate this concern, we give Theorem \ref{th: checking variance}, which states that if  $\tilde{\sigma}_Y/N$ exceeds $c$ by a constant factor, then even if $\operatorname{Var}(T)/N \geq c$ does not hold, the confidence bound of \eqref{eq: tilde CI} will  still be valid anyway:

\begin{theorem}\label{th: checking variance}
	Let $\hat{\sigma}_Y$ be given as \eqref{eq: tilde sigma Y}. If 
	\[\frac{1}{N} \tilde{\sigma}_Y \geq \frac{c}{z_{1-\alpha}^2 \alpha},\]
	and $\frac{1}{N}\operatorname{Var}(T) \leq c$, then $\bar{\theta}$ satisfies the upper bound of \eqref{eq: tilde CI}
	\[\bar{\theta} \leq \hat{\theta}_Y + z_{1-\alpha}\frac{\sqrt{\tilde{\sigma}_Y}}{L},\]
	  with probability $(1-\alpha)$.
\end{theorem}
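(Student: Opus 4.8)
The plan is to bypass the central limit theorem of Theorem \ref{th: CLT} entirely and instead control the upper tail of $T$ directly by Chebyshev's inequality, exploiting that the constant in the hypothesis $\frac{1}{N}\tilde{\sigma}_Y \geq \frac{c}{z_{1-\alpha}^2\alpha}$ is tuned precisely so that Chebyshev delivers the factor $\alpha$. First I would reduce the coverage event to a one-sided tail event for $T$. By Assumption \ref{as: monotone} we have $\theta_i \leq Y_i$ for every $i$, so from \eqref{eq: theta hat} and \eqref{eq: hat theta Y},
\[ \hat{\theta} = \frac{1}{L}\sum_{i=1}^N \theta_i Z_i \leq \frac{1}{L}\sum_{i=1}^N Y_i Z_i = \hat{\theta}_Y. \]
Since $T = L(\bar{\theta} - \hat{\theta})$ (as noted after \eqref{eq: T}), we may write $\bar{\theta} = \hat{\theta} + T/L \leq \hat{\theta}_Y + T/L$. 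Consequently the failure event $\{\bar{\theta} > \hat{\theta}_Y + z_{1-\alpha}\sqrt{\tilde{\sigma}_Y}/L\}$ is contained in $\{T > z_{1-\alpha}\sqrt{\tilde{\sigma}_Y}\}$, and it suffices to show the latter has probability at most $\alpha$.

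Next I would bound this tail. Because each $Z_i$ satisfies $\mathbb{P}(Z_i=1)=p$ and $\sum_{i=1}^N (\bar{\theta} - \theta_i) = 0$ by definition of $\bar{\theta}$, the statistic $T$ in \eqref{eq: T} has mean zero. Chebyshev's inequality then gives, for any deterministic level $s_0 > 0$,
\[ \mathbb{P}\bigl(T > z_{1-\alpha}\sqrt{s_0}\bigr) \leq \frac{\operatorname{Var}(T)}{z_{1-\alpha}^2 s_0}. \]
Taking $s_0 = \frac{Nc}{z_{1-\alpha}^2\alpha}$ and using the hypothesis $\frac{1}{N}\operatorname{Var}(T) \leq c$, the right-hand side becomes $\frac{\operatorname{Var}(T)\,\alpha}{Nc} \leq \alpha$. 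It remains to pass from the deterministic threshold $s_0$ to the random threshold $\tilde{\sigma}_Y$: on the event where the hypothesis $\tilde{\sigma}_Y \geq s_0$ holds, one has $z_{1-\alpha}\sqrt{\tilde{\sigma}_Y} \geq z_{1-\alpha}\sqrt{s_0}$, so $\{T > z_{1-\alpha}\sqrt{\tilde{\sigma}_Y}\}$ is contained in $\{T > z_{1-\alpha}\sqrt{s_0}\}$, whence the same $\alpha$ bound carries over and the coverage conclusion follows.

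The main obstacle is precisely this last step, since $\tilde{\sigma}_Y$ defined in \eqref{eq: tilde sigma Y} is a random quantity — it depends on $Z$, the same randomness driving $T$ — so one cannot naively substitute it into Chebyshev's inequality as a fixed threshold. The set-containment argument above resolves this only on the event $\{\tilde{\sigma}_Y \geq s_0\}$, so I would be careful to phrase the conclusion as a guarantee that holds on (equivalently, conditional on) the event that the computable condition $\frac{1}{N}\tilde{\sigma}_Y \geq \frac{c}{z_{1-\alpha}^2\alpha}$ is met; this is exactly the regime in which the theorem is meant to be applied, and it complements Theorem \ref{th: CLT}, which covers the remaining regime where $\frac{1}{N}\operatorname{Var}(T)$ is large enough for the normal approximation. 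A minor point to verify along the way is that the one-sided tail suffices and that the two-sided Chebyshev bound already yields exactly the constant $\alpha$ (so the sharper Cantelli inequality is not needed), and that the constants $p$, $c$, and $z_{1-\alpha}$ enter the argument only through $s_0$.
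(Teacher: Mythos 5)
Your proposal is correct and follows essentially the same route as the paper's proof: Chebyshev's inequality applied to the mean-zero statistic $T$, with the hypotheses chained as $\operatorname{Var}(T) \leq cN \leq \tilde{\sigma}_Y z_{1-\alpha}^2\alpha$ so that the Chebyshev tail bound lands exactly at $\alpha$, plus the monotonicity step $\hat{\theta} \leq \hat{\theta}_Y$ (which the paper leaves implicit). Your explicit handling of the deterministic threshold $s_0$ versus the random $\tilde{\sigma}_Y$ is a slightly more careful rendering of the same argument, not a different one.
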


As a result, if $\tilde{\sigma}_Y$ is large enough, then either $\operatorname{Var}(T)$ satisfies the lower bound required by Assumption \ref{as: CLT}, or Theorem \ref{th: checking variance} is satisfied so that \eqref{eq: tilde CI} is valid for finite $N$.

\paragraph{Estimation of outcomes under full control}

Let $\xi = (\xi_1,\ldots,\xi_N)$ denote the counterfactual outcome under zero treatment:
\[ \xi_i = f_i(0,\ldots,0), \qquad i=1,\ldots,N.\]
Analogous to Assumption \ref{as: monotone}, one might assume that
\[  Y_i \leq \xi_i \leq n_i, \qquad i=1,\ldots,N,\]
where $n_i$ is known and denotes the total enrollment at school $i$. A sufficient condition for this to hold is that
\[ \xi_i = \max_{x_{1:n}} f_i(x_1,\ldots,x_n), \qquad i=1,\ldots,N,\]
meaning that withholding treatment from all schools is assumed to give the worst outcomes. 

To find a $(1-\alpha)$ confidence lower bound on $\bar{\xi} = \frac{1}{N} \sum_{i=1}^N \xi_i$ under this assumption, it suffices to define $\tilde{Y} = (\tilde{Y}_1,\ldots,\tilde{Y}_N)$ and $\vartheta = (\vartheta_1,\ldots,\vartheta_N)$ by
\[ \tilde{Y}_i = n_i - Y_i \qquad \text{and} \qquad \vartheta_i = n_i - \xi_i, \qquad i=1,\ldots,N,\]
in which case it holds that  
\[ 0 \leq \vartheta_i \leq \tilde{Y}_i,\]
so that $\CI^{\text{obs}}$ \eqref{eq: Y CI} can be applied using observations $\tilde{Y}_{1:N}$ (instead of $Y_{1:N}$) to upper bound $\bar{\vartheta} = \frac{1}{N}\sum_{i=1}^N \vartheta_i$. Since $\bar{\xi} = \frac{1}{N} \sum_{i=1}^n (n_i - \vartheta_i)$, it follows that
\[ \frac{1}{N} \sum_{i=1}^N n_i - \CI^{\text{obs}}\]
is a $(1-\alpha)$ confidence lower bound on $\bar{\xi}$.

\paragraph{Nonuniform probabilities of effective treatment}

It may be of interest to consider exposure mappings such that
\[ \mathbb{P}(Z_i = 1) = \pi_i,\]
for some vector of nonuniform probabilities $\pi=(\pi_1,\ldots,\pi_N)$ with average value
\[ \bar{\pi} = \frac{1}{N} \sum_{i=1}^N \pi_i.\]
One might then consider a weighted point estimate of $\bar{\theta}$,
\[ \hat{\phi} = \frac{1}{L} \sum_{i=1}^N \frac{\bar{\pi}}{\pi_i} \theta_i Z_i \]
with $\hat{\sigma}$ estimating the variance of $L(\hat{\phi} - \bar{\theta})$ given by
\[	\hat{\sigma} = \frac{\sum_{i=1}^N \pi_i}{L} \sum_{i=1}^N (1 - \pi_i)(\phi_i - \hat{\phi})^2 Z_i + \sum_{i=1}^N \sum_{j=1}^N \phi_i \phi_j \frac{\Escr_{ij}}{P_{ij}} Z_i Z_j,\]
where $\phi_i = \frac{\bar{\pi}}{\pi_i} \theta_i$, and $R$ and $\Escr$ are given by
\begin{align*}
	R & = P - (\operatorname{diag}(\pi)(I - \operatorname{diag}(\pi)) - \pi \pi^T \\
	\Escr &= \left(I - \frac{1}{\bar{\pi}N} \obf\pi^T\right)^T R \left(I - \frac{1}{\bar{\pi}N} \obf\pi^T\right)^T. 
\end{align*}
For this setting, a central limit theorem analogous to Theorem \ref{th: CLT} can be shown to hold for the idealized confidence interval $\CI^{\text{ideal}}$. However, it is less clear how to modify Theorem \ref{th: tilde CI}; specifically, the appropriate condition analogous to \eqref{eq: tilde CI condition} does not seem clear, and is left as an open question.

\section{Simulation Study} \label{sec: sims}

By Theorem \ref{th: tilde CI}, $\CI^{\text{obs}}$ is an asymptotically valid confidence bound on $\bar{\theta}$ provided that the condition \eqref{eq: tilde CI condition} holds. We present here a simulation study to investigate how often the required condition \eqref{eq: tilde CI condition} holds, and whether in such cases the asymptotic result is a reasonable proxy for actual finite sample coverage.

\paragraph{Setup} In order to resemble the real-data setting of \citep{miguel2004worms}, the simulated units were chosen to have pairwise distances that were identical to the $N=49$ actual schools used in the study of \citep{miguel2004worms}. Treatments were assigned independently with Bernoulli probability $\rho=0.5$, and the observed outcomes $Y_i$ and counterfactuals $\theta_i$ were generated under four different scenarios:
\begin{enumerate}
	\item {\bf No effect, and no spatial clustering}: Treatment had no effect, so that $Y_i = \theta_i$ for all units $i$. The counterfactual outcomes $\theta_i$ were sampled without replacement from the observed infection counts of the schools in \citep{miguel2004worms}.
	\item {\bf No effect, but severe spatial clustering}: Treatment had no effect, so that $Y_i = \theta_i$ for all units $i$. The counterfactual outcomes $\theta_i$ were bi-modal based on geographic location, with $\theta_i=3$ for schools in the southern half of the map, and $\theta_i=15$ for schools in the northern half.
	\item {\bf Exposure model}: The counterfactual outcomes $\theta_i$ were sampled without replacement from the observed infection counts of the treated schools in \citep{miguel2004worms}. For schools that were directly treated and also had at least 2 of their nearest 5 neighbors directly treated, $Y_i = \theta_i$; otherwise, $Y_i$ was sampled uniformly from the untreated schools in \citep{miguel2004worms} whose observed outcomes exceeded $\theta_i$.
	\item {\bf Adversarial Interference}: $\theta_i$ was generated by sampling without replacement from a population where 2 units had $\theta_i=0$, 44 units had $\theta_i=10$, and 3 units had $\theta_i=20$. The observed outcomes were given by 
	\[Y_i = \begin{cases} 10 & \text{if $i$ treated and $\theta_i = 0$} \\ \theta_i & \text{otherwise}.\end{cases}\]
\end{enumerate}


\paragraph{Simulation Results}

When computing $\CI^{\text{obs}}$ as given by \eqref{eq: Y CI}, each neighborhood $\eta_i$ was chosen to equal $i$ plus $i$'s $d-1$ closest neighbors in geographic distance. The mappings $\{h_i\}$ were chosen to equal \eqref{eq: sim h}. The parameters $d_{\min}$ and $d$ were ranged over $d_{\min} \in \{2,\ldots,5\}$ and $d\in \{d_{\min},\ldots, 10\}$. In addition, the case $(d_{\min}, d)=(1,1)$, which ignores spatial information, was also explored. 

Table \ref{table: sim coverage} reports simulated coverage probabilities of $\CI^{\text{obs}}$ \eqref{eq: Y CI} for $\alpha = 1-0.95$, over those instances where the required condition \eqref{eq: tilde CI condition} was met.
Results are shown for all 4 generative models listed above, and for a subset of the explored values of $(d_{\min},d)$. Specifically, for $d_{\min} = 1,\ldots,5$, we report the smallest neighborhood size $d$ for which coverage exceeded 95\% in all scenarios, and for which coverage continued to exceed 95\% for all larger choices of $d$ as well; if no choice of $d$ meets this criterion, the $d$ which minimizes under-coverage is reported instead. 

For the first two scenarios when $d_{\min}$ was fixed, the coverage probabilities were observed to improve with increasing $d$, as this increased the number of units for which $Z_i=1$ (i.e., the ``sample size'') up to the set of all directly treated units. However, for the Exposure Model scenario, increasing $d$ too much led to conservative estimates. This is because the higher $d$ caused $Z_i$ to equal $1$ for treated units with large numbers of untreated neighbors; due to negative spillovers, these units had values of $Y_i$ which significantly exceeded their counterfactual $\theta_i$. This suggests the following intuition for choosing $d$ and $d_{\min}$:
\begin{enumerate}
	\item The set of units for which $Z_i=1$ should be large enough for the asymptotics of Theorem \ref{th: CLT} to be valid.
	\item However, the set of units for which $Z_i=1$ should not be too large, in order to exclude those treated units whose neighborhoods include large numbers of untreated units (in hopes of avoiding negative spillovers).
\end{enumerate}

Table \ref{table: good runs} shows the fraction of simulations for which the condition \eqref{eq: tilde CI condition} was met. This fraction is near 100\% in most cases, with the exception of Adversarial Interference when $(d_{\min}, d)=(1,1)$. For this scenario, we remark that coverage of $\CI^{\text{obs}}$ was only 77\% for the discarded instances where condition \eqref{eq: tilde CI condition} was not met, or 87\% coverage overall; in comparison, Table \ref{table: sim coverage} shows that coverage was $100\%$ when the condition was met. This reinforces that condition \eqref{eq: tilde CI condition} is necessary, or conversely that assuming $\theta_i = Y_i$ (essentially assuming SUTVA) can reduce coverage even when the interference is constrained to be nonnegative under Assumption \ref{as: monotone}. Our intuition why $(d_{\min},d) = (1,1)$ in particular is sensitive to the condition \eqref{eq: tilde CI condition} is that $R=0$ in this case, meaning that $\tilde{\sigma}=\hat{\sigma}$ and is no longer conservative. Evidently the conservativeness of $\tilde{\sigma}$ is beneficial in the Adversarial Interference setting, where the interference may reduce the empirical variance.

\begin{center} 
\begin{table}[h!] 
\begin{tabular}{ l l | r r r r  } 
 &  &  No effect, & No effect, & Exposure &  \\ 
$d_{\min}$ & $d$  &  no clustering & severe clustering &  model &  Adversarial \\
\hline
1 & 1 &     0.92 &   0.96 &   0.99 &   1.00 \\
2 & 3 &     0.95 &   0.96 &   0.98 &   1.00 \\
3 & 6 &     0.96 &   0.95 &   0.96 &   1.00 \\
4 & 10 &     0.95 &   0.95 &   0.99 &   1.00 \\
5 & 10 &     0.96 &   0.90 &   0.99 &   0.99		
\end{tabular} 
\caption{Simulated coverage probabilities of $\CI^{\text{obs}}$ under scenarios described in Section \ref{sec: sims}, for $d_{\min}=1,\ldots,5$, and for each $d_{\min}$ the smallest choice of $d$ which coverage exceeded 95\% in all scenarios (and also for all larger choices of $d$); if no choice of $d$ satisfies this requirement, the $d$ which minimizes under-coverage is reported instead. Each table value based on 1000 simulations.}
\label{table: sim coverage}
\end{table}
\end{center}

\begin{center}
\begin{table}[h!] 
\begin{tabular}{ l l | r r r r  }
 &  &  No effect, & No effect, & Exposure &  \\ 
$d_{\min}$ & $d$  &  no clustering & severe clustering &  model &  Adversarial \\
\hline
1 & 1 &    1.00  & 1.00  &  1.00  &  0.46 \\
2 & 3 &    1.00  &  1.00 &   1.00 &   0.97 \\
3 & 6 &    0.99  & 0.99  &  1.00  &  0.98 \\
4 & 10 &    1.00  &  1.00 &   1.00 &   0.98 \\
5 & 10 &    0.98  & 0.96  &  0.98  &  0.96
\end{tabular} 
\caption{Fraction of simulations for which condition \eqref{eq: tilde CI condition} was met, so that $\CI^{\text{obs}}$ could be applied to generate a valid confidence set. We remark that for the Adversarial scenario with $(d_{\min},d) = (1,1)$, coverage was reduced to $87\%$ if condition \eqref{eq: tilde CI condition} was ignored. Each table value based on 1000 simulations.}
\label{table: good runs}
\end{table}
\end{center}

\section{Data Analysis Example} \label{sec: real data}

As a pedagogical example, we apply $\CI^{\text{obs}}$ \eqref{eq: Y CI} to the data example of \citep{miguel2004worms}, using the neighborhoods $\{\eta_i\}$ and mappings $\{h_i\}$ tested in the simulations. Guided by the central limit theorem behavior under the simulations, we considered $(d_{\min}, d) \in \{(2,3), (3,6),(4,10)\}$; as shown in Table \ref{table: sim coverage}, these were the parameter values for which simulated coverage exceeded 95\% for all scenarios. Condition \eqref{eq: tilde CI condition} was met for all of these choices, so that in each case, $\CI^{\text{obs}}$ induced an asymptotically valid confidence upper bound on $\bar{\theta}$. 


Table \ref{table: real data} reports the estimated confidence intervals induced by $\CI^{\text{obs}}$, for the chosen values of $(d_{\min},d)$. For comparison, the confidence interval for $(d_{\min}, d) = (1,1)$, which ignores spatial information, is also shown. We see that usage of spatial information results in better (i.e., less conservative) estimates, as the intervals for $(2,3)$ and $(3,6)$ are smaller than that of $(1,1)$.

As we examined three different choices of the spatial parameters $(d_{\min},d)$, a Bonferoni-corrected confidence set may be appropriate. For $(d_{\min},d) = (3,6)$, the Bonferoni-corrected value set using $\CI^{\text{obs}}$ equaled $[0,\, 6.1]$, which still improves over the non-Bonferoni-corrected estimate of $[0,\, 7.0]$ attained without spatial information. However, validity of the Bonferoni-corrected $\CI^{\text{obs}}$ requires asymptotic normality to be valid not at the $0.95$ quantile, but rather at the more extreme quantile of $1 - .05/3 = 98.3\%$. In simulations, we found that actual coverage at this level was $97\%$, indicating that the coverage of the Bonferoni-corrected CI may only be approximate. 

We emphasize that our analysis is primarily pedagogical: as the number of schools is not particularly large ($N=49$), it is perhaps unsurprising that the asymptotic confidence intervals began to lose coverage after Bonferoni correction. For small datasets that may commonly arise in non-internet applications, it may be preferable to follow the traditional approach of specifying a single exposure model, in which case $\CI^{\text{obs}}$ yields estimates that are robust to misspecification. For larger datasets, it may be profitable to go further and explore multiple choices of exposure model (with correction for multiple testing); we leave the tast of efficiently scanning for the best choice of neighborhood as a subject for future work.

\begin{table}[h!] 
\begin{center}
\begin{tabular}{l|r }
	& Confidence Set \\
	$(d_{\min},d)$ & $[0,\, \CI^{\text{obs}}]$ \\
	\hline
	(1,1) & [0,\, 7.0] \\
	(2,3) & [0,\, 6.6] \\
	(3,6) & [0,\, 5.6] \\
	(4,10) & [0,\, 7.8]
\end{tabular} 
\end{center}
\caption{Estimates of $\bar{\theta}$ for the data of \citep{miguel2004worms} under varying neighborhoods parameterized by $(d_{\min},d)$.}
\label{table: real data}

\end{table}


\section{What can be learned with no assumptions on interference?} \label{sec: nonmonotone}

In this section we propose a new estimand, for which valid (though possibly conservative) estimation is possible under completely arbitrary interference. Naturally, this estimand will be more limited than existing ones which require structural assumptions. We envision that this estimand may be useful as the initial result of an analysis. For example, in order to clarify the understanding of a contentious research question, it might be advantageous to present a ``layered analysis'' in which the initial findings are cautious; these findings would be limited in the scope of their implication, but would ideally require no leaps of faith in the causal reasoning. 

In previous work, this type of limited analysis could be done by rejecting a null model (but without a corresponding confidence interval on the effect size), as proposed by  \citep{athey2017exact}, or by using rank-based estimands proposed by \citep{rosenbaum2007interference}. Our estimand may be more interpretable or descriptive than these previous approaches, giving quantitative bounds on the ``contrast attributable to treatment'', which will be closely related to attributable treatment effects \citep{rosenbaum2001effects}. 

\subsection{Definition of Estimand}

As before, let $X=(X_1,\ldots,X_N)$ and $Y = (Y_1,\ldots,Y_N)$ denote the randomized treatments and observed outcomes, under general interference so that each $Y_i$ is an idiosyncratic function of all $N$ treatment assignments
\[ Y_i = f_i(X_1,\ldots,X_N),\qquad i=1,\ldots,N,\]
and let $N_0$ and $N_1$ denote the number of units with treatments zero and one respectively:
\[N_0 = \sum_{i=1}^N 1\{X_i=0\}\qquad \text{and} \qquad N_1 = \sum_{i=1}^N 1\{X_i=1\}.\]
Let $\xi = (\xi_1,\ldots,\xi_N)$ denote the counterfactual outcome under full control,
\[\xi_i = f_i(0,\ldots,0),\]
with average value $\bar{\xi} = \frac{1}{N} \sum_{i=1}^N \xi_i$. Unlike before, we will make no assumptions on the values of $\xi$.

Let $\Delta_Y$ denote the observed contrast between the treated and untreated units,
\[\Delta_Y = \frac{1}{N_1} \sum_{i:X_i=1} Y_i - \frac{1}{N_0} \sum_{i:X_i=0} Y_i,\]
and let $\Delta_\xi$ denote the contrast between the same units, but under the counterfactual of full control:
\[\Delta_\xi = \frac{1}{N_1} \sum_{i:X_i=1} \xi_i - \frac{1}{N_0} \sum_{i:X_i=0} \xi_i.\]
We remark that the quantity $\Delta_\xi$ is the difference between two sample averages, and hence will concentrate at zero under mild assumptions, such as when $\xi$ is bounded and $X$ is generated randomly and independently of $\xi$.

Let $\tau^{\text{CAT}}$ denote the {\it constrast attributable to treatment}, defined as the difference between $\Delta_Y$ and $\Delta_{\xi}$:
\[ \tau^{\text{CAT}} = \Delta_Y - \Delta_\xi.\]
If $\tau^{\text{CAT}} > 0$, the causal implication is that the treatment changed the value of the contrast between treated and control, shifting it in favor of the treated population. This means that the treatment caused the treated units to have higher outcomes than the control units. However, it does not specify whether the contrast was caused by an increase in the outcomes of the treated units, or a decrease in the outcomes of the control units.

To generalize $\tau^{\text{CAT}}$, let $Z = (Z_1,\ldots,Z_N)$ denote an effective treatment indicator, based on a set of neighborhoods $\{\eta_i\}$ and exposure mappings $\{h_i\}$ as before, so that
\[ Z_i = h_i(\{Z_j: j \in \eta_i\}),\]
and recall that $L = \sum_{i=1}^N Z_i$. Let $\Delta_{Y,Z}$ denote the observed contrast between the units in effective treatment and control
\[\Delta_{Y,Z} = \frac{1}{L} \sum_{i:Z_i=1} Y_i - \frac{1}{N - L} \sum_{i:Z_i=0} Y_i.\]
Let $\Delta_{\xi,Z}$ denote the contrast between the same units, but under the counterfactual of full control:
\[\Delta_{\xi,Z} = \frac{1}{L} \sum_{i:Z_i=1} \xi_i - \frac{1}{N-L} \sum_{i:Z_i=0} \xi_i.\]
Let $\tau^{\text{ZCAT}}$ denote the {\it $Z$-induced contrast attributable to treatment}, given by
\[ \tau^{\text{ZCAT}} = \Delta_{Y,Z} - \Delta_{\xi,Z}.\]
The quantity $\tau^{\text{ZCAT}}$ has a similar causal interpretation as  $\tau^{\text{CAT}}$, except that the effect of treatment is measured on the contrast between a different division of units.

\subsection{Estimation of $\tau^{\text{CAT}}$ and $\tau^{Z\text{CAT}}$}

Theorem \ref{th: CAT} gives conditions under which $\Delta_Y$ consistently estimates $\tau^{\text{CAT}}$, and gives (potentially conservative) asymptotic confidence intervals, both one- and two- sided:

\begin{theorem} \label{th: CAT}
	Let $X = (X_1,\ldots,X_N)$ denote binary treatments assigned by sampling without replacement, and let $Y = (Y_1,\ldots,Y_N)$ denote binary outcomes generated under arbitrary interference. Then with probability converging to at least $1-\alpha$, it holds that
	\begin{align}
		\tau^{\text{CAT}} & \geq \Delta_Y - \frac{z_{1-\alpha}}{2}\sqrt{\frac{N}{N_0N_1}}, \label{eq: CAT one-sided}
	\end{align}
	as well as
	\begin{align}
		|\tau^{\text{CAT}} - \Delta_Y| &\leq \frac{z_{1-\alpha/2}}{2}\sqrt{\frac{N}{N_0N_1}}, \label{eq: CAT two-sided}
	\end{align}
	implying that $\Delta_Y = \tau^{\text{CAT}} + o_P(1)$ as $\min(N_1,N_0) \rightarrow \infty$.
\end{theorem}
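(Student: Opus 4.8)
The plan is to reduce the entire statement to a concentration result for the single quantity $\Delta_\xi$. The definition $\tau^{\text{CAT}} = \Delta_Y - \Delta_\xi$ gives the algebraic identity $\Delta_Y - \tau^{\text{CAT}} = \Delta_\xi$, so the one-sided claim \eqref{eq: CAT one-sided} is exactly the event $\Delta_\xi \le \frac{z_{1-\alpha}}{2}\sqrt{N/(N_0N_1)}$, and the two-sided claim \eqref{eq: CAT two-sided} is exactly $|\Delta_\xi| \le \frac{z_{1-\alpha/2}}{2}\sqrt{N/(N_0N_1)}$. The crucial structural point is that $\xi$ is a \emph{fixed} counterfactual vector (the full-control outcome, which does not depend on the realized assignment $X$), while the randomness enters only through $X$, which assigns a uniformly random subset of $N_1$ units to treatment. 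Hence $\Delta_\xi$ is simply a normalized difference of two complementary subsample averages of the fixed population $\xi_1,\ldots,\xi_N$ under sampling without replacement, a classical finite-population object.

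First I would compute the first two moments of $\Delta_\xi$. Writing $\hat\xi_1 = \frac1{N_1}\sum_{i:X_i=1}\xi_i$ and $\hat\xi_0$ analogously, the constraint $N_1\hat\xi_1 + N_0\hat\xi_0 = N\bar\xi$ yields the identity $\Delta_\xi = \frac{N}{N_0}(\hat\xi_1 - \bar\xi)$, which immediately gives $\mathbb{E}[\Delta_\xi]=0$. Combining this identity with the standard variance of a sample mean under sampling without replacement, $\operatorname{Var}(\hat\xi_1) = \frac{N_0}{N\,N_1}S^2$ where $S^2 = \frac{1}{N-1}\sum_{i=1}^N(\xi_i-\bar\xi)^2$, gives the clean expression
\[ \operatorname{Var}(\Delta_\xi) = \frac{N}{N_0N_1}\,S^2. \]
The binary assumption then does the real work: $\sum_i(\xi_i-\bar\xi)^2 = N\bar\xi(1-\bar\xi)$, so $S^2 = \frac{N}{N-1}\bar\xi(1-\bar\xi) \le \frac14\frac{N}{N-1}$, whence $\operatorname{sd}(\Delta_\xi) \le \frac12\sqrt{N/(N_0N_1)}\,(1+o(1))$. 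This is precisely the constant appearing in the theorem, so up to the $1+o(1)$ factor from $\frac{N}{N-1}$, the thresholds in \eqref{eq: CAT one-sided} and \eqref{eq: CAT two-sided} are at least $z_{1-\alpha}$ and $z_{1-\alpha/2}$ standard deviations of $\Delta_\xi$, respectively.

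The engine of the proof is a finite-population central limit theorem: I would invoke H\'ajek's CLT to conclude $\Delta_\xi/\operatorname{sd}(\Delta_\xi) \Rightarrow N(0,1)$. Granting this, the one-sided bound follows because the event $\{\Delta_\xi \le \frac{z_{1-\alpha}}{2}\sqrt{N/(N_0N_1)}\}$ contains $\{\Delta_\xi \le z_{1-\alpha}\operatorname{sd}(\Delta_\xi)(1-o(1))\}$, whose probability tends to $\Phi(z_{1-\alpha}) = 1-\alpha$; the two-sided bound is identical with $z_{1-\alpha/2}$ and an absolute value. Finally $\Delta_Y = \tau^{\text{CAT}} + o_P(1)$ drops out because $\frac{N}{N_0N_1} = \frac1{N_0}+\frac1{N_1}\to 0$ as $\min(N_0,N_1)\to\infty$, forcing $\operatorname{sd}(\Delta_\xi)\to 0$.

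I expect the main obstacle to be the legitimacy of the CLT in degenerate regimes. H\'ajek's Lindeberg-type condition, specialized to binary $\xi$, reduces to $\bar\xi(1-\bar\xi)\min(N_0,N_1)\to\infty$, and this can fail when $\xi$ is (nearly) constant, for instance when only a vanishing fraction of units have $\xi_i=1$. I would handle this through a dichotomy: when $\bar\xi(1-\bar\xi)$ is bounded below, the H\'ajek condition holds and the argument above applies directly, whereas when $\bar\xi(1-\bar\xi)\to 0$ the standard deviation of $\Delta_\xi$ is of strictly smaller order than the target threshold, so a crude Chebyshev bound already places $\Delta_\xi$ inside the interval with probability tending to one. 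Getting this case split clean, and verifying that Chebyshev genuinely suffices whenever the Gaussian approximation is unavailable, is the only delicate part; the moment computation and the reduction to $\Delta_\xi$ are routine.
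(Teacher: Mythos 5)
Your proposal is correct and follows essentially the same route as the paper: rewrite $\Delta_\xi$ as $\tfrac{N}{N_0}$ times a centered subsample mean under sampling without replacement, compute its variance, bound the population variance of the binary vector $\xi$ by $1/4$, and invoke a finite-population central limit theorem. Your closing dichotomy (H\'ajek's condition when $\bar\xi(1-\bar\xi)$ is bounded away from zero, Chebyshev when it degenerates) is in fact more careful than the paper, which cites the finite-population CLT without verifying its applicability in the degenerate regime.
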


Theorem \ref{th: ZCAT} gives conditions under which $\Delta_{Y,Z}$ consistently estimates $\tau^{\text{ZCAT}}$, as well as confidence intervals. Unlike Theorem \ref{th: CAT} which assumes sampling without replacement, in Theorem \ref{th: ZCAT} we assume $X$ is generated by Bernoulli randomization. To state the theorem, we recall quantities $P$ and $T$:
\begin{align*}
	P_{ij} &= \mathbb{P}(Z_iZ_j =1),\qquad i,j=1,\ldots,N \\
	T(Z;\xi) &= \sum_{i=1}^N (\xi_i - \bar{\xi})Z_i.
\end{align*}
\begin{theorem}\label{th: ZCAT}
Let $X = (X_1,\ldots,X_N)$ denote binary treatments assigned independently by Bernoulli randomization, and let $Y = (Y_1,\ldots,Y_N)$ denote binary outcomes generated under arbitrary interference. Let the following hold
\begin{align*}
\operatorname{Var}(T) &\geq cN  \\
\sum_{j:j\neq i} 1\{|\eta_i \cap \eta_j| > 0\} & \leq D,  \qquad i=1,\ldots,N,\\
P_{ii} &= p,\qquad i=1,\ldots,N,
\end{align*}
for constants $c>0$, $D$, and $p$ which are fixed as $N \rightarrow \infty$. Then it holds with probability converging to at least $1-\alpha$ that
\begin{align*}
\tau^{Z\text{CAT}} & \geq \Delta_{Y,z} - \frac{z_{1-\alpha}}{2p(1-p)}\cdot\sqrt{\frac{\lambda_1}{N}} 
\end{align*}
and
\begin{align*}
\left| \Delta_Y - \tau^{Z\text{CAT}}\right| & \leq \frac{z_{1-\alpha/2}}{2p(1-p)}\cdot\sqrt{\frac{\lambda_1}{N}},
\end{align*}
where $\lambda_1$ is the largest eigenvalue of $\left(I - \frac{\obf\obf^T}{N}\right)P\left(I - \frac{\obf \obf^T}{N}\right)$. 
\end{theorem}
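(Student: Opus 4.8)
The plan is to reduce the two claimed bounds to a single statement about $\Delta_{\xi,Z}$, which depends only on the fixed (unknown) counterfactual vector $\xi$ and the random effective treatments $Z$. Since $\tau^{Z\text{CAT}} = \Delta_{Y,Z} - \Delta_{\xi,Z}$ by definition, we have $\Delta_{Y,Z} - \tau^{Z\text{CAT}} = \Delta_{\xi,Z}$, so the one-sided claim is equivalent to $\Delta_{\xi,Z} \leq \frac{z_{1-\alpha}}{2p(1-p)}\sqrt{\lambda_1/N}$ and the two-sided claim is equivalent to $|\Delta_{\xi,Z}| \leq \frac{z_{1-\alpha/2}}{2p(1-p)}\sqrt{\lambda_1/N}$, each holding with probability at least $1-\alpha$. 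Everything therefore reduces to a tail bound on $\Delta_{\xi,Z}$.

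The key algebraic step is to express $\Delta_{\xi,Z}$ through $T$. Writing $S_1 = \sum_i \xi_i Z_i$, using $\sum_i \xi_i(1-Z_i) = N\bar\xi - S_1$ and the identity $T = S_1 - \bar\xi L$, the two-term contrast collapses to
\begin{equation*}
\Delta_{\xi,Z} = \frac{S_1}{L} - \frac{N\bar\xi - S_1}{N-L} = T\left(\frac{1}{L} + \frac{1}{N-L}\right) = \frac{N\,T}{L(N-L)}.
\end{equation*}
Since $\mathbb{E}[Z_i] = P_{ii} = p$ gives $\mathbb{E}[L] = Np$, while the bounded-overlap condition forces $\operatorname{Var}(L) = O(N)$, we have $L/N \to p$ in probability, hence $\frac{N}{L(N-L)} = \frac{1}{Np(1-p)}(1+o_P(1))$ and $\Delta_{\xi,Z} = \frac{T}{Np(1-p)}(1+o_P(1))$. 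The problem is thus reduced to a tail bound on the mean-zero statistic $T$.

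Two ingredients then finish the argument. First, the central limit theorem $T/\sqrt{\operatorname{Var}(T)} \Rightarrow N(0,1)$ holds: the summands $(\xi_i - \bar\xi)Z_i$ are bounded (as $\xi$ is binary), each $Z_i$ depends only on $\{X_j : j \in \eta_i\}$ so the dependency graph has maximum degree $D$, and $\operatorname{Var}(T) \geq cN$; these are exactly the conditions under which the CLT of Theorem \ref{th: CLT} was proved, and they do not invoke the $p_{\min}$ condition used there only for variance estimation. Second, I would bound $\operatorname{Var}(T)$ from above by the computable eigenvalue quantity: writing $H = I - \obf\obf^T/N$ for the (symmetric, idempotent) centering matrix, so that $\xi - \bar\xi\obf = H\xi$, formula \eqref{eq: var T} gives
\begin{equation*}
\operatorname{Var}(T) = (H\xi)^T P\,(H\xi) = (H\xi)^T (HPH)(H\xi) \leq \lambda_1\|H\xi\|^2 = \lambda_1 \sum_{i=1}^N (\xi_i - \bar\xi)^2 = \lambda_1 N\bar\xi(1-\bar\xi) \leq \frac{\lambda_1 N}{4},
\end{equation*}
where the middle equality uses $H(H\xi) = H\xi$, the penultimate equality uses $\xi_i^2 = \xi_i$, and the last inequality uses $\bar\xi(1-\bar\xi)\leq 1/4$.

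Combining, the CLT yields $\mathbb{P}(T > z_{1-\alpha}\sqrt{\operatorname{Var}(T)}) \to \alpha$; because the variance bound makes $z_{1-\alpha}\sqrt{\operatorname{Var}(T)} \leq \frac{z_{1-\alpha}}{2}\sqrt{\lambda_1 N}$, the larger threshold gives $\mathbb{P}(T > \frac{z_{1-\alpha}}{2}\sqrt{\lambda_1 N}) \leq \alpha + o(1)$ --- this conservativeness is precisely what lets the interval be valid without ever estimating $\operatorname{Var}(T)$. Dividing through by $Np(1-p)$ and absorbing the $(1+o_P(1))$ factor by Slutsky produces the one-sided bound, and the two-sided bound follows identically from the symmetric event $\{|T| \leq z_{1-\alpha/2}\sqrt{\operatorname{Var}(T)}\}$. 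I expect the main obstacle to be the final bookkeeping: one must check that the $o_P(1)$ multiplicative error from the $L(N-L)$ denominator and the one-sided variance inequality combine in the correct direction, so that asymptotic coverage is at least $1-\alpha$; making this rigorous calls for a careful Slutsky-type argument rather than a naive substitution of $L \approx Np$.
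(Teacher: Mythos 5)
Your proposal is correct and follows essentially the same route as the paper's proof: express $\Delta_{\xi,Z}$ as $\frac{N}{L(N-L)}T(Z;\xi)$, invoke the local-dependence CLT for $T/\sqrt{\operatorname{Var}(T)}$ from the proof of Theorem \ref{th: CLT}, upper bound $\operatorname{Var}(T)$ by $N\lambda_1/4$ via the centering matrix and the binary-outcome variance bound, and substitute $L = Np(1+o_P(1))$. The only cosmetic difference is that you control $L$ by a covariance count under bounded overlap where the paper reuses its bounded-differences concentration bound; both are valid.
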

\subsection{Data Example (Facebook Voting Experiment) }

The paper \citep{bond201261} describes a voting experiment, in which Facebook users were encouraged through an online advertisement to self-report that they had voted by clicking on an ``I voted'' button. For the units randomly assigned to treatment, the advertisement also contained the profile pictures of up to six Facebook friends who had already self-reported. This means that for each viewer, the content of the advertisement depended on the actions of previous viewers, possibly leading to interference. 

Table \ref{table: facebook} gives the reported counts for the experiment, rounded for display. For these values, Theorem \ref{th: CAT} implies that with confidence converging to at least 95\%, $\tau^{\text{CAT}}$ lies within the interval $[2.06\%, \, 2.26\%]$, or equivalently that $\Delta_\xi \in [-0.1\%, 0.1\%]$. This means that with high confidence the profile pictures caused the discrepancy in rates of self-reported voting to shift from roughly zero to approximately 2\% in favor of the treated population. This estimate holds with no assumptions on the interference, meaning that peer effects could be heterogeneous, positive or negative, long range/dense/global, and could include factors such as unobserved offline interactions between individuals, global influence of actors such as mass media or news aggregators, negative backlash in oversaturated areas, or untreated units becoming discouraged upon discovering the nature of the treatment.

\begin{table}[h!] 
\begin{center}
\begin{tabular}{l|r |r }
	& Control & Treated \\
	\hline
	Total Count & 611K & 60M \\
	Clicked ``I voted'' & 109K & 12M 
\end{tabular} 
\end{center}
\caption{Counts  (rounded for display) from the Facebook election experiment of \citep{bond201261}.}
\label{table: facebook}
\end{table}

\subsection{Discussion}

The estimand $\tau^{\text{CAT}}$ is significantly more limited than traditional estimands. It does not yield any information on whether the treatments improved outcomes for treated units or worsened outcomes for control units. Nor does it yield any information on the counterfactuals of full treatment or full control (i.e., $\bar{\theta}$ or $\bar{\xi}$), nor on the expected value of $\Delta_Y$ or $N^{-1}\sum_i Y_i$. 

In the presence of interference, it seems reasonable to believe that estimation of such quantities will generally require subjective judgement. For example, to estimate the outcome if all units were treated, given an experiment where only a small fraction are exposed to an advertisement, one would have to discern whether units might become fatigued by oversaturation of the advertisement, leading to diminishing effects or even changes in the sign of the effect. To show concentration of the observed average to its expectation, one would have to judge the extent to which the average outcome could be shifted by a single actor with global influence.

We are {\it not} claiming that these subjective judgements should never be made, or that such quantities should not be estimated. Rather, in poorly-understood settings where these modeling decisions may lack consensus, we hope to add clarity by defining a ``baseline'' or ``fallback'' that can be inferred from the randomization alone. This is in contrast to the analyst's best estimate or ``best guess'', which may utilize assumptions -- such as an exposure model or partial interference -- requiring the full breadth of the analyst's subjective expertise . In practice, it may be for some settings that presenting both types of estimates -- those that minimize assumptions and those that utilize the full (and necessarily subjective for contentious topics) expertise of the analyst -- may be the most informative for the broadest audience.

 \bibliographystyle{apalike}
 \bibliography{bibfile}

%
%
%
%
%
%

%
%

\pagebreak

\appendix

\section*{Supplemental Materials}

\section{Proof of Theorem \ref{th: CLT} and Corollary \ref{cor: CLT}}

The proof of Theorem \ref{th: CLT} will use Lemmas \ref{le: CLT helper variance identity}, \ref{le: CLT helper mcdiarmid}, and \ref{le: hat sigma term 2}, as well as Theorem \ref{th: local dependence} from \cite[Th 2.7]{chen2004normal}, which gives a central limit theorem under local dependence. Lemmas \ref{le: CLT helper variance identity}, \ref{le: CLT helper mcdiarmid}, and \ref{le: hat sigma term 2} are proven in Section \ref{sec: CLT helper lemmas}.

\begin{lemma} \label{le: CLT helper variance identity}
	It holds that
\begin{align} \label{eq: var T helper 1}
	\operatorname{Var}(T) &= N p(1-p)  \left(\frac{1}{N} \sum_{i=1}^N (\theta_i - \bar{\theta})^2\right) + \theta^T \Escr \theta 
\end{align}
	
\end{lemma}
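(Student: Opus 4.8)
The plan is to prove the identity by pure linear algebra, starting from the variance expression \eqref{eq: var T}, $\operatorname{Var}(T) = (\theta - \bar\theta)^T P (\theta - \bar\theta)$. It is convenient to write $c = \theta - \bar\theta\,\obf$ for the centered counterfactual vector, where $\obf$ is the all-ones vector; the structural fact I will exploit throughout is that $c$ is orthogonal to $\obf$, i.e. $\obf^T c = \sum_{i=1}^N (\theta_i - \bar\theta) = 0$. With this notation $\operatorname{Var}(T) = c^T P c$, and the goal is to split this quadratic form into the two advertised summands.

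First I would substitute the definition of $R$ from \eqref{eq: R}, rewriting $P = R + p(1-p)I + p^2\obf\obf^T$, and expand $c^T P c = c^T R c + p(1-p)\,c^T c + p^2 (\obf^T c)^2$. The last term vanishes immediately by the centering identity $\obf^T c = 0$. The middle term is $p(1-p)\|c\|^2 = p(1-p)\sum_{i=1}^N (\theta_i - \bar\theta)^2$, which I rewrite as $Np(1-p)\bigl(\tfrac1N\sum_{i=1}^N (\theta_i-\bar\theta)^2\bigr)$ to match the first summand of \eqref{eq: var T helper 1}.

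It then remains to identify the leftover term $c^T R c$ with $\theta^T\Escr\theta$. The clean way to do this is to observe that the centering matrix $H = I - \obf\obf^T/N$ appearing in \eqref{eq: Escr} is symmetric and satisfies $H\theta = \theta - \bar\theta\,\obf = c$, since $\obf^T\theta = N\bar\theta$. Hence $\theta^T\Escr\theta = \theta^T H^T R H\theta = (H\theta)^T R (H\theta) = c^T R c$, and combining with the previous step yields \eqref{eq: var T helper 1}. The computation is elementary, so there is no substantive obstacle; the only thing to watch is the bookkeeping of the projection/centering, namely checking that $\obf^T c = 0$ annihilates the rank-one $p^2\obf\obf^T$ contribution and that $H$ maps $\theta$ exactly onto the centered vector $c$, so that the residual quadratic form $c^T R c$ is reproduced verbatim by $\theta^T\Escr\theta$.
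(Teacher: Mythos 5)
Your proof is correct and follows essentially the same route as the paper's: decompose $P = R + p(1-p)I + p^2\obf\obf^T$ via \eqref{eq: R}, kill the rank-one term using $\obf^T(\theta - \bar\theta\,\obf) = 0$, and identify $c^T R c$ with $\theta^T\Escr\theta$ through the centering matrix $I - \obf\obf^T/N$. No differences worth noting.
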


\begin{lemma} \label{le: CLT helper mcdiarmid}
	Let the conditions of Theorem \ref{th: CLT} hold. Then
	\begin{align} 
		\frac{1}{L} \sum_{i=1}^N (\theta_i - \hat{\theta})^2Z_i &= \frac{1}{N} \sum_{i=1}^N (\theta_i - \bar{\theta})^2 + o_P(1). \label{eq: hat sigma term 1}
	\end{align}
\end{lemma}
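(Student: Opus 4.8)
The plan is to reduce both sides to empirical first and second moments of $\theta$ and then show that each such moment, reweighted by the $Z_i$'s, concentrates around the corresponding unweighted population moment via McDiarmid's bounded-differences inequality, as the lemma's name suggests. First I would rewrite the empirical variance over the effective-treatment units as a difference of moments,
\[
\frac{1}{L}\sum_{i=1}^N (\theta_i - \hat{\theta})^2 Z_i = \frac{1}{L}\sum_{i=1}^N \theta_i^2 Z_i - \hat{\theta}^2,
\]
and similarly $\frac{1}{N}\sum_i (\theta_i - \bar{\theta})^2 = \frac{1}{N}\sum_i \theta_i^2 - \bar{\theta}^2$. Thus it suffices to prove the two moment-convergence statements $\hat{\theta} = \bar{\theta} + o_P(1)$ and $\frac{1}{L}\sum_i \theta_i^2 Z_i = \frac{1}{N}\sum_i \theta_i^2 + o_P(1)$, since $\theta$ is bounded by $B$ and squaring is Lipschitz on bounded sets.

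Each of these is a ratio whose numerator is $\frac{1}{N}\sum_i g(\theta_i) Z_i$ with $g(\theta_i) \in \{\theta_i, \theta_i^2\}$ and whose denominator is $L/N$. The key step is to control the three sums $\frac{1}{N}\sum_i Z_i$, $\frac{1}{N}\sum_i \theta_i Z_i$, and $\frac{1}{N}\sum_i \theta_i^2 Z_i$ as functions of the independent treatments $X_1,\ldots,X_N$. I would first observe that each $X_j$ influences at most $D+1$ of the effective treatments $Z_i$: if $j \in \eta_i$ then, since $j \in \eta_j$ as well, $\eta_i$ and $\eta_j$ overlap, so by part 2 of Assumption \ref{as: CLT} there are at most $D$ such indices $i \neq j$, plus $i=j$ itself. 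Hence flipping a single $X_j$ alters $\frac{1}{N}\sum_i \theta_i^k Z_i$ for $k \in \{0,1,2\}$ by at most $(D+1)B^k / N$. McDiarmid's inequality then gives, for each fixed $t > 0$, a deviation probability of order $\exp(-\Omega(t^2 N))$ with constants depending only on the fixed $B$ and $D$; since $\mathbb{E} Z_i = p$, the relevant means are $p$, $p\bar{\theta}$, and $p \cdot \frac{1}{N}\sum_i \theta_i^2$. This yields $\frac{1}{N}\sum_i Z_i = p + o_P(1)$, $\frac{1}{N}\sum_i \theta_i Z_i = p\bar{\theta} + o_P(1)$, and $\frac{1}{N}\sum_i \theta_i^2 Z_i = p \cdot \frac{1}{N}\sum_i \theta_i^2 + o_P(1)$.

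Finally I would assemble the ratios. Because $p > 0$ is fixed and $L/N = p + o_P(1)$, the reciprocal satisfies $N/L = 1/p + o_P(1)$ by continuous mapping; dividing the numerator concentration statements by $L/N$ and using that $\bar{\theta}$ and $\frac{1}{N}\sum_i \theta_i^2$ are bounded by $B$ and $B^2$ gives $\hat{\theta} = \bar{\theta} + o_P(1)$ and $\frac{1}{L}\sum_i \theta_i^2 Z_i = \frac{1}{N}\sum_i \theta_i^2 + o_P(1)$. Substituting these into the moment decomposition from the first step, together with $(\bar{\theta} + o_P(1))^2 = \bar{\theta}^2 + o_P(1)$, yields the claim. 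I expect the only genuinely delicate point to be the bounded-differences constant, namely verifying that each $X_j$ perturbs at most $D+1$ of the $Z_i$; this rests on the fact that $i \in \eta_i$, so that neighborhood-membership of $j$ forces overlap with $\eta_j$ and the overlap bound applies. Everything else is routine concentration and continuous-mapping bookkeeping, and I note that this lemma uses only parts 1 and 2 of Assumption \ref{as: CLT}, not the variance lower bound in part 3.
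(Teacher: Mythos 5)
Your proposal is correct and follows essentially the same route as the paper: both arguments apply McDiarmid's bounded-differences inequality to the $Z$-weighted sums (using the neighborhood-overlap bound $D$ to control how many $Z_i$ a single $X_j$ can flip) and then assemble the ratio using $L/N = p + o_P(1)$. The only cosmetic difference is that you expand the empirical variance into raw moments $\frac{1}{L}\sum_i \theta_i^2 Z_i - \hat{\theta}^2$ while the paper concentrates the centered sum $\frac{1}{Np}\sum_i(\theta_i-\bar{\theta})^2 Z_i$ directly; your accounting of the Lipschitz constant as $D+1$ (including $i=j$) is, if anything, slightly more careful than the paper's.
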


\begin{lemma} \label{le: hat sigma term 2}
		Let the conditions of Theorem \ref{th: CLT} hold. Then
		\begin{align}
					\sum_{i=1}^N \sum_{j=1}^N \theta_i \theta_j \frac{\Escr_{ij}}{P_{ij}} Z_i Z_j &= \theta^T \Escr \theta + o_P(N).	\label{eq: hat sigma term 2}
		\end{align}
\end{lemma}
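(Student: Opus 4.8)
Write $S = \sum_{i=1}^N\sum_{j=1}^N a_{ij} Z_iZ_j$ with $a_{ij} = \theta_i\theta_j \Escr_{ij}/P_{ij}$, so that the claim is $S = \theta^T\Escr\theta + o_P(N)$. The plan is to show that $S$ concentrates around its mean. Since $\mathbb{E}[Z_iZ_j] = P_{ij}$ for all $i,j$ (with $P_{ii} = \mathbb{P}(Z_i=1) = p$), the weight $P_{ij}$ in the denominator cancels in expectation, giving $\mathbb{E}[S] = \sum_{i,j}\theta_i\theta_j\Escr_{ij} = \theta^T\Escr\theta$ \emph{exactly}. It therefore suffices to prove $\operatorname{Var}(S) = o(N^2)$, after which Chebyshev's inequality yields $|S - \theta^T\Escr\theta| = O_P(\sqrt N) = o_P(N)$.

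The key structural input, and the main obstacle, is that $\Escr$ is not sparse: although $R_{ij} = \operatorname{Cov}(Z_i,Z_j)$ vanishes whenever $\eta_i\cap\eta_j=\emptyset$ (so that $R$ has zero diagonal and, by part 2 of Assumption \ref{as: CLT}, at most $D$ nonzero off-diagonal entries per row), the double centering $\Escr = (I-\obf\obf^T/N)R(I-\obf\obf^T/N)$ spreads these entries across all of $\Escr$. I would resolve this by writing $\Escr_{ij} = R_{ij} - r_i - c_j + m$, where $r_i$, $c_j$, and $m$ are the row, column, and grand averages of $R$. Because each row of $R$ has at most $D$ entries, each bounded in magnitude, these averages are all $O(D/N)$. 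Hence $|\Escr_{ij}|$, and therefore $|a_{ij}|$ (using $\theta_i \le B$ and $P_{ij}\ge p_{\min}$), is $O(1)$ on the $O(ND)$ index pairs in the support of $R$, and $O(D/N)$ off it.

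Finally I would bound $\operatorname{Var}(S) = \sum_{(i,j),(k,l)} a_{ij}a_{kl}\operatorname{Cov}(Z_iZ_j, Z_kZ_l)$ by local dependence. Since $Z_iZ_j$ depends only on $\{X_m : m\in\eta_i\cup\eta_j\}$, the covariance vanishes unless $(\eta_i\cup\eta_j)$ meets $(\eta_k\cup\eta_l)$; for fixed $(i,j)$ this holds for at most $O(DN)$ pairs $(k,l)$, and each surviving covariance is bounded. Splitting the sum according to whether each index pair lies on the support of $R$, and combining the magnitude bounds with the term counts, yields $O(N)$ in each case: both pairs off-support contributes $O(DN^3)\cdot O(D^2/N^2)$; one on and one off contributes $O(D^2N^2)\cdot O(D/N)$; both on-support contributes $O(D^3N)\cdot O(1)$. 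Thus $\operatorname{Var}(S) = O(N) = o(N^2)$, completing the argument. The delicate point throughout is the bookkeeping that trades the $O(1)$ magnitude of the sparse part of $\Escr$ against the $O(D/N)$ magnitude of its dense centering part, so that the extra factor of $N$ coming from the $O(DN^3)$ nonzero covariance terms is absorbed.
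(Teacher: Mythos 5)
Your proposal is correct, and it shares the paper's one essential structural idea --- the decomposition $\Escr_{ij} = R_{ij} - r_i - c_j + \mu$, which trades the $O(1)$ magnitude of the at-most-$D$-per-row sparse part $R$ against the $O(D/N)$ magnitude of the dense centering corrections --- but it runs the concentration step through a genuinely different tool. The paper splits the sum into the $R$-part and the centering part, verifies that each is a function of the independent $X_1,\ldots,X_N$ satisfying the bounded difference property with per-coordinate constants of order $B^2D^2C_p$ (using that flipping one $X_m$ changes at most $D$ of the $Z_i$), and applies Azuma--Hoeffding/McDiarmid to get $|f-\mathbb{E}f| = o_P(N)$ with exponential tails. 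You instead compute $\operatorname{Var}(S)$ directly by expanding $\sum a_{ij}a_{kl}\operatorname{Cov}(Z_iZ_j,Z_kZ_l)$, killing the covariance whenever $\eta_i\cup\eta_j$ and $\eta_k\cup\eta_l$ are disjoint, and your three-way case count (both pairs off the support of $R$, one on, both on) correctly yields $O(D^3N)$ in each case, so Chebyshev gives $O_P(\sqrt N)$ fluctuations --- the same rate McDiarmid delivers here, and more than enough for $o_P(N)$. Both arguments also use the same starting point $\mathbb{E}[Z_iZ_j]=P_{ij}$ (including $P_{ii}=p$ since $Z_i^2=Z_i$) to get $\mathbb{E}S=\theta^T\Escr\theta$ exactly. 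What each buys: your second-moment route avoids the somewhat delicate bookkeeping of how a single flipped $X_m$ propagates through both factors of $Z_iZ_j$ and through the denominators $P_{ij}\ge p_{\min}$, at the cost of a fourth-index covariance expansion; the paper's route gives sub-Gaussian tails, which would matter if one wanted the estimate uniformly over many exposure models (as in the Bonferroni discussion of Section \ref{sec: real data}), but is not needed for the lemma as stated. The only imprecision worth flagging is your parenthetical $R_{ij}=\operatorname{Cov}(Z_i,Z_j)$: this holds off the diagonal, but $R_{ii}=0$ while $\operatorname{Var}(Z_i)=p(1-p)$; since you correctly treat $R$ as having zero diagonal, nothing downstream is affected.
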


\begin{theorem}[\citep{chen2004normal}, Theorem 2.7] \label{th: local dependence}
	Let $\{w_i: i \in \mathcal{V}\}$ be random variables indexed by the vertices of a dependency graph with maximal degree $D$. Put $W = \sum_{i\in \mathcal{V}} w_i$. Assume that $\mathbb{E}W^2=1$, $\mathbb{E}w_i=0$, and $\mathbb{E}|w_i|^s \leq C^s$ for $i \in \mathcal{V}$, $s \in (2,3]$, and for some $C >0$. Then
	\[\sup_z |\mathbb{P}(W \leq z) -\Phi(z) | \leq 75 D^{5(s-1)}|\mathcal{V}|C^s,\]
	where $\Phi(x)$ is the CDF of a standard normal.	
\end{theorem}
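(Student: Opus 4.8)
The plan is to prove this Berry--Esseen-type bound by Stein's method, with the dependency-graph structure used to control the cross terms. For a fixed $z$ I would introduce the Stein solution $f = f_z$ solving the equation $f'(w) - w f(w) = 1\{w \le z\} - \Phi(z)$, and recall the standard bounds $\|f\|_\infty \le \sqrt{2\pi}/4$ and $\|f'\|_\infty \le 1$. By construction $\mathbb{P}(W \le z) - \Phi(z) = \mathbb{E}[f'(W) - W f(W)]$, so the whole task reduces to bounding $\mathbb{E}[f'(W) - Wf(W)]$ uniformly in $z$.

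Next I would set up the local decomposition. For each $i \in \mathcal{V}$ let $A_i$ consist of $i$ together with its neighbors in the dependency graph, so that $|A_i| \le D+1$ by the maximal-degree hypothesis, and define $\zeta_i = \sum_{j \in A_i} w_j$ and $W_i = W - \zeta_i$. The defining property of a dependency graph is that $w_i$ is independent of $\{w_j : j \notin A_i\}$, hence of $W_i$; since $\mathbb{E}w_i = 0$ this gives $\mathbb{E}[w_i\,g(W_i)] = 0$ for any function $g$. I would also record the variance identity $1 = \mathbb{E}W^2 = \sum_i \mathbb{E}[w_i \zeta_i]$, which holds because $\mathbb{E}[w_iw_j] = 0$ whenever $i$ and $j$ are distinct and non-adjacent.

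The main expansion then proceeds as follows. Writing $\mathbb{E}[Wf(W)] = \sum_i \mathbb{E}[w_i f(W)] = \sum_i \mathbb{E}[w_i(f(W) - f(W_i))]$ using the independence above, I would Taylor-expand $f(W) - f(W_i)$ about $W_i$ to first order in $\zeta_i$. The leading term $\sum_i \mathbb{E}[w_i \zeta_i f'(W_i)]$, after replacing $f'(W_i)$ by $f'(W)$ and invoking the variance identity, cancels against $\mathbb{E}[f'(W)]$, leaving only remainder terms of two kinds: the second-order Taylor remainders $\mathbb{E}[w_i\zeta_i(f'(\tilde W) - f'(W_i))]$ and the replacement errors incurred by swapping $W_i$ for $W$. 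Each of these I would bound by H\"older's inequality together with the moment hypothesis $\mathbb{E}|w_i|^s \le C^s$; the combinatorics of how many neighbors-of-neighbors enter each cross term is what produces the power $D^{5(s-1)}$, summation over $\mathcal{V}$ produces the factor $|\mathcal{V}|$, and the moment bound produces $C^s$.

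The step I expect to be the genuine obstacle is the non-smoothness of the test function $1\{w \le z\}$: the solution $f_z$ is Lipschitz but $f_z'$ has a jump at $z$, so the naive Lipschitz estimate on $f'(\tilde W) - f'(W_i)$ fails. Controlling those terms requires a concentration bound showing $\mathbb{P}(z - \delta \le W \le z + \delta)$ is at most a small multiple of $\delta$ plus the approximation error itself, uniformly in $z$; this is typically bootstrapped from the Stein bound in a recursive fashion (or obtained through a separate smoothing argument), and it is also where the explicit constant $75$ and the precise exponent $5(s-1)$ on $D$ must be tracked carefully. I would handle this exactly as in the standard local-dependence treatment of Stein's method, deferring the sharp constant bookkeeping to the very end.
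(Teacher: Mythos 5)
The paper does not prove this statement at all: it is Theorem 2.7 of Chen and Shao (2004), imported verbatim as a black box (the label and bracketed citation make this explicit), so there is no internal proof to compare yours against. Judged on its own terms, your outline correctly identifies the strategy that Chen and Shao actually use --- Stein's method with the local decomposition $W = W_i + \zeta_i$, $\zeta_i = \sum_{j \in A_i} w_j$, the cancellation of $\mathbb{E}f'(W)$ against $\sum_i \mathbb{E}[w_i \zeta_i f'(W)]$ via $\mathbb{E}W^2 = 1$, and the recognition that the jump in $f_z'$ is the real difficulty.

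However, as a proof there is a genuine gap, and it sits exactly where you placed your disclaimer. The remainder terms $\mathbb{E}\bigl[w_i \zeta_i \bigl(f_z'(\tilde W) - f_z'(W_i)\bigr)\bigr]$ cannot be closed without a concentration inequality of the form $\mathbb{P}(a \le W_i \le b) \le C(b-a) + (\text{error})$, uniformly in $z$, for the \emph{locally dependent} sum $W_i$; this is not a routine smoothing step but a separate substantial result (in Chen and Shao it is its own proposition, itself proved by a second application of Stein's method to a truncated sum), and it is where the iterated neighborhoods $A_i \subset B_i \subset \cdots$ enter. Your first-order expansion over a single neighborhood $A_i$ of size $D+1$ can only produce powers of $D$ on the order of $D^{2}$ or $D^{3}$ in the cross-term counting; the exponent $5(s-1)$ in the stated bound arises precisely from composing the concentration argument (which costs additional factors of $D$ through the larger neighborhoods $N(A_i)$ and $N(N(A_i))$) with the Taylor remainders. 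Deferring ``the sharp constant bookkeeping to the very end'' therefore defers the central quantitative content of the theorem, not just a cosmetic constant; the plan is a faithful roadmap of the known proof, but it does not constitute one.
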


\begin{proof}[Proof of Theorem \ref{th: CLT}]
	The two claims are proven as follows
	\begin{enumerate}
	\item To show \eqref{eq: hat sigma consistency}, observe that
	\begin{align*}
		\hat{\sigma} & = N p(1-p)\left(\frac{1}{L}\sum_{i=1}^N (\theta_i - \hat{\theta})^2 Z_i\right)  + \sum_{i=1}^N \sum_{j=1}^N \theta_i \theta_j \frac{\Escr_{ij}}{P_{ij}} Z_i Z_j \\
		& = N p(1-p)\left(\frac{1}{N} \sum_{i=1}^N (\theta_i - \bar{\theta})^2  \right) +  \theta^T \Escr \theta + o_P(N)	\\
		& = \operatorname{Var}(T) + o_P(N) \\
		& = \operatorname{Var}(T)(1 + o_P(1)),
	\end{align*}
	where the first equation is \eqref{eq: hat sigma}, the second follows from \eqref{eq: hat sigma term 1} and \eqref{eq: hat sigma term 2}, the third is \eqref{eq: var T helper 1}, and the fourth uses the inequality $\operatorname{Var}(T) \geq cN$ from Assumption \ref{as: CLT}.

\item To show that $\frac{T}{\sqrt{\operatorname{Var}(T)}}$ and $\frac{T}{\sqrt{\hat{\sigma}}}$ is asymptotically normal, let $w_i$ be given by
\[ w_i = \frac{(\bar{\theta} - \theta_i)Z_i}{\sqrt{\operatorname{Var}(T)}},\]
so that $W  = \sum_{i=1}^N w_i$ equals $\frac{T}{\sqrt{\operatorname{Var}(T)}}$. By Assumption \ref{as: CLT}, each neighbor $\eta_i$ has overlap with at most $D$ other neighborhoods. As a result, the variables $\{w_i\}$ form a dependency graph with maximal degree $D$. Assumption \ref{as: CLT} also enforces that $|\theta_i| \leq B$ and $\operatorname{Var}(T) \geq cN$, which together imply that
\[ |w_i | \leq \frac{2B}{\sqrt{cN}}.\]
As a result, Theorem \ref{th: local dependence}, it follows for $s \in (2,3]$ that 
	\[\sup_z |\mathbb{P}(W \leq z) -\Phi(z) | \leq 75 D^{5(s-1)}N\left(\frac{2B}{\sqrt{cN}}\right)^s.\]	
The left hand side goes to zero for $s > 2$ when $D, B$ and $c$ are fixed as $N \rightarrow \infty$, proving convergence in distribution of $\frac{T}{\sqrt{\operatorname{Var}(T)}}$ to a standard normal.

	To show that $\frac{T}{\sqrt{\hat{\sigma}}}$ is asymptotically normal, use asymptotic normality of $\frac{T}{\sqrt{\operatorname{Var}(T)}}$, and substitute $\hat{\sigma} = \operatorname{Var}(T)(1 + o_P(1))$.
\end{enumerate}
\end{proof}

\begin{proof}[Proof of Corollary \ref{cor: CLT}]
	By Theorem \ref{th: CLT}, it holds that $\frac{T}{\sqrt{\hat{\sigma}}}$ is asymptotically normal. Thus with probability converging to $1-\alpha$, it holds that
	\[\frac{T}{\sqrt{\hat{\sigma}}} \cdot \frac{L}{L} \leq z_{1-\alpha}.\]
	Substituting $T = \sum_{i=1}^N (\bar{\theta} - \theta_i)Z_i$ and rearranging terms yields that with probability converging to $1-\alpha$,
	\begin{align*}
	\bar{\theta} & \leq \frac{1}{L} \sum_{i=1}^N \theta_i Z_i + z_{1-\alpha} \frac{\sqrt{\hat{\sigma}}}{L} \\
	& = \hat{\theta} + z_{1-\alpha} \frac{\sqrt{\hat{\sigma}}}{L}, 
	\end{align*}
	proving the corollary.
\end{proof}

\subsection{Proof of Lemmas \ref{le: CLT helper variance identity}, \ref{le: CLT helper mcdiarmid}, and \ref{le: hat sigma term 2}} \label{sec: CLT helper lemmas}

\begin{proof}[Proof of Lemma \ref{le: CLT helper variance identity}]
	The identity \eqref{eq: var T helper 1} holds by
	\begin{align*}
		\operatorname{Var}(T) &= (\theta - \bar{\theta})^T P (\theta - \bar{\theta}) \\
		&= (\theta - \bar{\theta})^T\left( p(1-p)I + p^2 11^T + R\right)(\theta - \bar{\theta}) \\
		&= (\theta - \bar{\theta})^T(p(1-p)I)(\theta - \bar{\theta}) + (\theta - \bar{\theta})^TR(\theta - \bar{\theta}) \\
		&= p(1-p)\sum_{i=1}^N(\theta - \bar{\theta})^2 + \theta^T\left(I - \frac{11^T}{N}\right)^TR\left(I - \frac{11^T}{N}\right)\theta \\
		&= p(1-p)\sum_{i=1}^N(\theta - \bar{\theta})^2 + \theta^T\Escr\theta, 
	\end{align*}
	where the first equality is \eqref{eq: var T}; the second holds by \eqref{eq: R}; the third holds because $(\theta - \bar{\theta})^T11^T(\theta - \bar{\theta}) = 0$; the fourth because $(\theta - \bar{\theta}) = (I - 11^T/N)\theta$; and the fifth by \eqref{eq: Escr}.
		
\end{proof}

The proof of Lemmas \ref{le: CLT helper mcdiarmid} and \ref{le: hat sigma term 2} will use Theorem \ref{th: azuma}, which is a well-known concentration inequality

\begin{theorem}[Azuma-Hoeffding, Method of Bounded Differences]\label{th: azuma}
	Let $f:\mathbb{R}^N \mapsto \mathbb{R}$ satisfy the bounded difference property with constants $m_1,\ldots,m_N$, meaning that
	\[ | f(x_1,\ldots,x_N) - f(x_1',\ldots,x_N') | \leq m_i,\]
	whenever $x_j = x_j'$ for all $j\neq i$. Let $X_1,\ldots,X_N$ be independent random variables. It holds that 
	\[ \mathbb{P}\left( |f(X_1,\ldots,X_N) - \mathbb{E}f(X_1,\ldots,X_N)| > t\right) \leq 2\exp\left(-\frac{2t^2}{M}\right),\]
	where $M = \sum_{i=1}^N m_i^2$.
\end{theorem}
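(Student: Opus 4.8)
The plan is to prove the method of bounded differences by the standard Doob martingale argument, reducing it to a concentration bound for a sum of bounded martingale increments. First I would introduce the filtration generated by the inputs and define the Doob martingale $V_i = \mathbb{E}[f(X_1,\ldots,X_N) \mid X_1,\ldots,X_i]$ for $i=0,1,\ldots,N$, so that $V_0 = \mathbb{E}f(X_1,\ldots,X_N)$ and $V_N = f(X_1,\ldots,X_N)$. Writing $D_i = V_i - V_{i-1}$ produces a martingale difference sequence with $\mathbb{E}[D_i \mid X_1,\ldots,X_{i-1}] = 0$, and the deviation $f(X_1,\ldots,X_N) - \mathbb{E}f(X_1,\ldots,X_N)$ telescopes as $\sum_{i=1}^N D_i$.

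The key step is to show that each increment $D_i$ has conditional range at most $m_i$. Because the $X_j$ are independent, conditioning on $X_1,\ldots,X_i$ lets me write $V_i$ as a function of $X_1,\ldots,X_i$ obtained by integrating $f$ over the remaining coordinates against their product law. Defining $g_i(x) = \mathbb{E}[f \mid X_1,\ldots,X_{i-1},X_i = x]$ and setting $A_i = \inf_x g_i(x)$, $B_i = \sup_x g_i(x)$, the increment $D_i = g_i(X_i) - V_{i-1}$ lies, conditionally on $X_1,\ldots,X_{i-1}$, in an interval of width $B_i - A_i$. The bounded difference hypothesis together with independence forces $B_i - A_i \leq m_i$: two distinct values of the $i$-th coordinate change $f$ pointwise by at most $m_i$, and since the integration over the later coordinates uses the same product measure for every value of $x$, the resulting partial conditional expectations differ by at most $m_i$.

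With the increments controlled, I would apply Hoeffding's lemma conditionally: for a mean-zero variable confined to an interval of width $m_i$, the conditional moment generating function satisfies $\mathbb{E}[e^{\lambda D_i} \mid X_1,\ldots,X_{i-1}] \leq e^{\lambda^2 m_i^2 / 8}$. Chaining these bounds from $i = N$ down to $i = 1$ by repeated use of the tower property yields $\mathbb{E}[e^{\lambda(f - \mathbb{E}f)}] \leq e^{\lambda^2 M / 8}$, where $M = \sum_{i=1}^N m_i^2$. A Chernoff bound then gives $\mathbb{P}(f - \mathbb{E}f > t) \leq e^{-\lambda t + \lambda^2 M / 8}$, and optimizing over $\lambda$ by taking $\lambda = 4t/M$ produces the one-sided tail $e^{-2t^2/M}$. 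Applying the same argument to $-f$ and adding the two tails gives the stated two-sided bound with the leading factor $2$.

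I expect the main obstacle to be the rigorous justification of the conditional range bound $B_i - A_i \leq m_i$: this is precisely where independence of the $X_j$ is indispensable, and where the bounded-difference property must be transferred from $f$ itself to its partial conditional expectations $g_i$. Once that estimate is in hand, the conditional Hoeffding step, the martingale moment-generating-function chaining, and the Chernoff optimization are all routine.
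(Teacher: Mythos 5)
Your proposal is correct: it is the canonical Doob-martingale proof of McDiarmid's bounded-differences inequality, with the conditional range bound $B_i - A_i \leq m_i$ obtained from independence exactly as needed, the conditional Hoeffding lemma giving $\mathbb{E}[e^{\lambda D_i}\mid X_1,\ldots,X_{i-1}] \leq e^{\lambda^2 m_i^2/8}$, and the Chernoff optimization $\lambda = 4t/M$ yielding the stated two-sided bound $2e^{-2t^2/M}$. Note that the paper states Theorem \ref{th: azuma} as a well-known result and gives no proof of it, so there is no in-paper argument to compare against; your argument is the standard one and correctly supplies what the paper omits.
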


\begin{proof}[Proof of Lemma \ref{le: CLT helper mcdiarmid}]	
	To prove  \eqref{eq: hat sigma term 1},  we will require the following intermediate results:
	\begin{align}
		L &= Np(1+o_P(1)) \label{eq: mcdiarmid 1} \\
		\hat{\theta} &= \bar{\theta} + o_P(1) \label{eq: mcdiarmid 2} \\
		\frac{1}{Np} \sum_{i=1}^N (\theta_i - \bar{\theta})^2Z_i &= \frac{1}{N} \sum_{i=1}^N (\theta_i - \bar{\theta})^2 + o_P(1), \label{eq: mcdiarmid 3}
	\end{align}
	which we prove as follows:
	\begin{enumerate}
		\item To prove \eqref{eq: mcdiarmid 1},	let $f_L(X_1,\ldots,X_N) = \sum_{i=1}^N Z_i$. Since $Z_i = \prod_{j \in \eta_i} X_j$, and each neighborhood $\eta_i$ overlaps with at most $D$ other neighborhoods, it can be seen that $f_L$ has the bounded difference property with constants $m_i=D$ for all $i=1,\ldots,N$. As a result, Theorem \ref{th: azuma} with $M = ND^2$ implies that
	\[ \mathbb{P}\left( |f_L - \mathbb{E}f_L| > \epsilon\cdot N\right) \leq 2\exp\left(-\frac{2\epsilon^2 N}{D^2}\right),\]
	implying that $f_L = \mathbb{E}f_L + o_P(N)$. Since $L = f_L$ and $\mathbb{E}f_L = Np$, this implies that 
	\begin{equation*}
	L = Np + o_P(N).
	\end{equation*}
	Since $p$ is fixed under Assumption \ref{as: CLT}, this proves \eqref{eq: mcdiarmid 1}.
	
	\item To prove \eqref{eq: mcdiarmid 2}, let $f_{\hat{\theta}}$ be given by
	\[ f_{\hat{\theta}}(X_1,\ldots,X_N) = \frac{1}{Np} \sum_{i=1}^N \theta_i Z_i.\]
	Since $L = Np(1+o_P(1))$, it holds that
	\begin{equation} \label{eq: helper hat theta}
	\hat{\theta} = f_{\hat{\theta}}(X_1,\ldots,X_N)\cdot (1+o_P(1)).
	\end{equation}
	By similar reasoning as before, $f_{\hat{\theta}}$ satisfies the bounded difference property with constants $m_i = \frac{DB}{Np}$. Theorem \ref{th: azuma} (with $M = \frac{D^2B^2}{Np^2}$) implies that
	\[ \mathbb{P}\left( |f_{\hat{\theta}} - \mathbb{E}f_{\hat{\theta}}| \geq \epsilon\right) \leq 2\exp\left(-\frac{2\epsilon^2Np^2}{D^2B^2}\right).\]
	Since $\mathbb{E}f_{\hat{\theta}} = \bar{\theta}$, this implies that
	\[ f_{\hat{\theta}}(X_1,\ldots,X_N) = \bar{\theta} + o_P(1),\]
	which in turn implies by \eqref{eq: helper hat theta} that $\hat{\theta} = \bar{\theta} + o_P(1)$.

	\item The proof of \eqref{eq: mcdiarmid 3} is identical to that of \eqref{eq: mcdiarmid 2}, using $f(X_1,\ldots,X_N) = (Np)^{-1}\sum_{i=1}^N (\theta_i - \bar{\theta})^2Z_i$.
	\end{enumerate}
	Using these intermediate results, we now derive \eqref{eq: hat sigma term 1}:
	\begin{align*}
		\frac{1}{L} \sum_{i=1}^N (\theta_i - \hat{\theta})^2 Z_i & = \frac{1}{Np(1+o_P(1))} \sum_{i=1}^N (\theta_i - \bar{\theta} + o_P(1))^2 Z_i \\
		& = (1 + o_P(1)) \cdot \frac{1}{Np} \sum_{i=1}^N (\theta_i - \bar{\theta})^2 Z_i  + o_P(1)\\
		& = (1 + o_P(1)) \cdot \frac{1}{N}\sum_{i=1}^N (\theta_i + \bar{\theta})^2 + o_P(1),
	\end{align*}
	where the first inequality follows by substituting \eqref{eq: mcdiarmid 1} and \eqref{eq: mcdiarmid 2}; the second by algebraic manipulation; and the third by \eqref{eq: mcdiarmid 3}.
\end{proof}

\begin{proof}[Proof of Lemma \ref{le: hat sigma term 2}]
	Let $r_i$ and $c_j$ denote the row and column means of the matrix $R$, and let $\mu$ denote the overall mean:
	\[ r_i = \frac{1}{N}\sum_{j=1}^N R_{ij}, \qquad c_j = \frac{1}{N} \sum_{i=1}^N R_{ij}, \qquad \text{and} \qquad \mu = \frac{1}{N^2} \sum_{i=1}^N \sum_{j=1}^N R_{ij}.\]
	Then $\Escr$ satisfies 
	\[ \Escr_{ij} = R_{ij} - r_j - c_j + \mu.\]
	Let $f:\{0,1\}^N\mapsto \mathbb{R}$ denote the function
	\begin{align*}
	f(X_1,\ldots,X_N) &= \sum_{i=1}^N \sum_{j=1}^N \theta_i \theta_j \frac{\Escr_{ij}}{P_{ij}}Z_iZ_j,
	\end{align*}
	and let $f_1$ and $f_2$ denote the functions
	\begin{align*}
		f_1(X_1,\ldots,X_N) & = \sum_{i=1}^N \sum_{j=1}^N \theta_i \theta_j \frac{R_{ij}}{P_{ij}}Z_iZ_j \\
		f_2(X_1,\ldots,X_N) & = \sum_{i=1}^N \sum_{j=1}^N \theta_i \theta_j \frac{r_i + c_j - \mu}{P_{ij}}Z_iZ_j.
	\end{align*}
	Then $f$ satisfies
	\[ f(X_1,\ldots,X_N) = f_1(X_1,\ldots,X_N) + h_2(X_1,\ldots,X_N).\]
	Let $C_p$ denote the constant
	\[C_p = \max\left(\frac{p-p^2}{p_{\min}}, \frac{p^2-p_{\min}}{p_{\min}}\right).\]
	As an intermediate result, we show now that $h_1$ and $h_2$ satisfy the bounded difference property, so that the following holds:
	\begin{align}
		|f_1(x_1,\ldots,x_N) - f_1(x_1',\ldots,x_N')| &\leq 2B^2D^2C_p \label{eq: BD 1}\\
		|f_2(x_1,\ldots,x_N) - f_2(x_1',\ldots,x_N')| &\leq 4B^2D^2C_p, \label{eq: BD 2}
	\end{align}
	whenever $x_i = x_i'$ for all $i=1,\ldots,N$ except for a single coordinate. We show \eqref{eq: BD 1} and \eqref{eq: BD 2} below:
	
	\begin{enumerate}
		\item Given $x_1,\ldots,x_N$ and $x_1',\ldots,x_N'$ which differ in a single element, let $\{Z_i\}$ and $\{Z_i'\}$ be induced in the obvious way
		\[ Z_i = \prod_{j \in \eta_i} x_j \qquad \text{and} \qquad Z_i' = \prod_{j \in \eta_i} x_j', \qquad i=1,\ldots,N.\]
		Let $S \subset [N]$ denote the indices for which $Z_i \neq Z_i'$. It follows that
		\begin{align} 
		\nonumber	 & \hskip-1cm |f_1(x_1,\ldots,x_N) - f_1(x_1',\ldots,x_N')|  \\
		\nonumber	 & \hskip-0cm \leq   \sum_{i \in S} \sum_{j=1}^N \left| \theta_i \theta_j \frac{R_{ij}}{P_{ij}}  (Z_iZ_j - Z_i'Z_j')\right| \\
		\nonumber & \hskip.5cm  +  \sum_{i \notin S} \sum_{j \in S} \left| \theta_i \theta_j \frac{R_{ij}}{P_{ij}}(Z_iZ_j - Z_i'Z_j') \right| \\
		\nonumber & \hskip.5cm + \sum_{i \notin S} \sum_{j \notin S} \left| \theta_i \theta_j \frac{R_{ij}}{P_{ij}} (Z_iZ_j - Z_i' Z_j')\right| \\
		\nonumber	 & \hskip-0cm =  \sum_{i \in S} \sum_{j=1}^N \left| \theta_i \theta_j \frac{R_{ij}}{P_{ij}} \right| + \sum_{j\in S} \sum_{i \notin S} \left| \theta_i \theta_j \frac{R_{ij}}{P_{ij}} \right| \\
					& \hskip-0cm \leq  \sum_{i \in S} \sum_{j=1}^N B^2 \left|\frac{R_{ij}}{P_{ij}} \right| + \sum_{j\in S} \sum_{i=1}^N B^2 \left| \frac{R_{ij}}{P_{ij}} \right|, \label{eq: BD 1 helper}
		\end{align}
		where the first inequality follows by simple algebra; the second follows because $|Z_iZ_j - Z_i'Z_j'| \leq 1$, with $Z_iZ_j = Z_i'Z_j'$ if $i,j \notin S$; and the third follows because $\theta_i \leq B$.
		
		To bound the right hand side of \eqref{eq: BD 1 helper}, recall the definition of $R$
		\[R = P - p(1-p)I - p^211^T,\]
		and recall that $p_{\min} \leq P_{ij} \leq p$, which implies that $p_{\min} - p^2 \leq R_{ij} \leq p - p^2$ and hence that $|R_{ij}/P_{ij}| \leq C_p$. Also recall that $P_{ij} = p^2$ if $\eta_i$ and $\eta_j$ are disjoint. It follows that each row/column of $P$ has at most $D$ entries not equal to $p^2$, and hence that each row/column of $R$ has at most $D$ nonzero entries. As a result, the right hand side of \eqref{eq: BD 1 helper} can be bounded by
		\[  |f_1(x_1,\ldots,x_N) - f_1(x_1',\ldots,x_N')| \leq 2B^2D^2 C_p,\]
		where we have used the fact that $|S| \leq D$ if $\{x_i\}$ and $\{x_i'\}$ differ in only a single element. This proves \eqref{eq: BD 1}.
		
		\item Analogous to before, it holds that 
		\begin{align} 
		\nonumber	 & \hskip-1cm |f_2(x_1,\ldots,x_N) - f_2(x_1',\ldots,x_N')|  \\
		\nonumber	 & \leq \sum_{i \in S} \sum_{j=1}^N \left| \theta_i \theta_j \frac{r_i + c_j - \mu}{P_{ij}} \right| +  \sum_{i \notin S} \sum_{j \in S} \left| \theta_i \theta_j \frac{r_i + c_j - \mu}{P_{ij}} \right| \\
		\nonumber	 & \hskip1cm {} + \sum_{i \notin S} \sum_{j \notin S} \left| \theta_i \theta_j \frac{r_i + c_j - \mu}{P_{ij}} (Z_iZ_j - Z_i'Z_j')\right| \\
				 	& \leq \sum_{i \in S} \sum_{j=1}^N B^2 \left| \frac{r_i + c_j - \mu}{P_{ij}} \right| +  
						\sum_{j \in S} \sum_{i=1}^N B^2 \left| \frac{r_i + c_j - \mu}{P_{ij}} \right|.  \label{eq: BD 2 helper}
		\end{align}		
		Since each row/column of $R$ has at most D nonzero entries which also satisfy $p_{\min} - p^2 \leq R_{ij} \leq p - p^2$, it follows that
		\[ \frac{D}{N}(p_{\min} - p^2) \leq r_i \leq \frac{D}{N}p(1-p) \qquad \text{and} \qquad \frac{D}{N}(p_{\min}-p^2) \leq c_j \leq \frac{D}{N}p(1-p),\]
		as well as
		\[ \frac{D}{N}(p_{\min}-p^2) \leq \mu \leq \frac{D}{N} p(1-p).\]
		As a result, it follows that 
		\[ \left|\frac{r_i + c_j - \mu}{P_{ij}}\right| \leq 2\frac{D}{N}C_p,\]
		so that the right hand side of \eqref{eq: BD 2 helper} can be bounded by
		\begin{align*}
			  |f_2(x_1,\ldots,x_N) - f_2(x_1',\ldots,x_N')| &\leq 4B^2DN\cdot\frac{D}{N}C_p \\
			  = 4B^2D^2C_p,
		\end{align*}
		proving \eqref{eq: BD 2}.
	\end{enumerate}
	Using \eqref{eq: BD 1} and \eqref{eq: BD 2}, we can apply Theorem \ref{th: azuma} to show that 
	\[ \mathbb{P}\left(|f - \mathbb{E}f| > \epsilon N\right) \leq \exp\left(-\frac{2\epsilon^2N}{36B^4D^4C_p^2}\right),\]
	which implies that $|f - \mathbb{E}f| = o_P(N)$. Since $\mathbb{E}f$ equals
	\[ \mathbb{E}f(X_1,\ldots,X_N) = \sum_{i=1}^N \theta_i \theta_j \Escr_{ij},\]
	this establishes \eqref{eq: hat sigma term 2}, proving the lemma.
	
\end{proof}

\section{Proof of Theorems \ref{th: tilde CI} and \ref{th: checking variance}}

\begin{proof}[Proof of Theorem \ref{th: tilde CI}]
Proof by contradiction. Let $f$ denote the objective function of \eqref{eq: tilde CI}
\[ f(\theta) = \hat{\theta} + z_{1-\alpha} \frac{\sqrt{\tilde{\sigma}}}{L},\]
let \eqref{eq: tilde CI condition} hold, and suppose there exists $\theta^* \in \mathbb{R}^L$ which is a feasible solution to \eqref{eq: tilde CI} whose objective value satisfies $f(\theta^*) > f(Y)$. Let $\theta(t)$ for $t \in [0,1]$ be given by
\[ \theta(t) = Y + t(\theta^* - Y),\]
so that $\theta(0) = Y$ and $\theta(1) = \theta^*$. Let $\hat{\theta}(t)$ and $\hat{\sigma}(t)$ denote the empirical mean and standard deviation of $\theta(t)$, as given by \eqref{eq: theta hat} and \eqref{eq: tilde sigma}.

Observe that $f$ is increasing in $\hat{\theta}$ and $\tilde{\sigma}$. Since $\hat{\theta}(t) \leq \hat{\theta}_Y$ for all $t \in [0,1]$, it must hold that $\tilde{\sigma}(1) > \tilde{\sigma}_Y$ in order for $f(\theta(1)) > f(Y)$. Since $\tilde{\sigma}(1) > \tilde{\sigma}_Y$ and $\tilde{\sigma}(0) = \tilde{\sigma}_Y$, it follows from continuity of $\tilde{\sigma}(t)$ that exists $t_0 < 1$ which satisfies
\[ t_0 = \sup \{t \in [0,1]: \tilde{\sigma}(t) \leq \tilde{\sigma}_Y\}\]
must exist and satisfy $t_0 < 1$. Since it also holds that
\[ f(\theta(1)) = f(\theta(t_0)) + \int_{t_0}^1 \nabla f(\theta(t))^T (\theta^* - Y)\, dt,\]
it follows that if we show that $f(\theta(t_0)) \leq f(Y)$ and $\nabla f(\theta(t))^T (\theta^* - Y) \leq 0$ for all $t \in [t_0,1]$, then we have shown that $f(\theta(1)) \leq f(Y)$, contradicting our claim that $f(\theta(1)) > f(Y)$. We prove these below:
\begin{enumerate}
	\item Since $\hat{\theta}(t_0) \leq \hat{\theta}_Y$ and $\hat{\sigma}(t_0) \leq \hat{\sigma}_Y$, it follows that $f(\theta(t_0)) \leq f(Y)$. 
	\item To show  $\nabla f(\theta(t))^T (\theta^* - Y) \leq 0$ for all $t \in [t_0,1]$, it suffices to show that $\theta^* - Y \leq 0$ and $\nabla f(\theta(t)) \geq 0$  for all $t \in [t_0,1]$ (both elementwise):
	\begin{enumerate}
		\item The condition $\theta^* - Y \leq 0$ is implied by the constraint set of \eqref{eq: tilde CI}
		\item To show that $\nabla f(\theta(t)) \geq 0$ elementwise, let $g(\theta)$ denote
	\[ g(\theta) = \sum_{i=1}^N \sum_{j=1}^N \theta_i \theta_j \frac{\max(0,\Escr_{ij})}{P_{ij}} Z_i Z_j,\]
	and lower bound the derivative $\frac{df}{d \theta_i}$ by
\begin{align} \label{eq: derivative bound}
\nonumber	\frac{df}{d\theta_i} &= \frac{d\hat{\theta}}{d\theta_i} + \frac{z_{1-\alpha}}{2L}\cdot \frac{1}{\sqrt{\tilde{\sigma}}} \frac{d\tilde{\sigma}}{d\theta_i} \\
\nonumber &= \frac{Z_i}{L} + \frac{z_{1-\alpha}}{2L}\cdot\frac{1}{\sqrt{\tilde{\sigma}}}\left(Np(1-p)\left(\frac{2\theta_iZ_i}{L} - 2\hat{\theta}\frac{Z_i}{L}\right) + \frac{dg}{d\theta_i}\right)\\
	& \geq \frac{Z_i}{L} - \frac{z_{1-\alpha}}{L} \frac{\hat{\theta}}{\sqrt{\tilde{\sigma}}} \cdot\left(\frac{Np(1-p)}{L}\right)Z_i 
\end{align}
where the second line holds because $\frac{d\hat{\theta}}{d\theta_i}$ and $\frac{d\tilde{\sigma}}{d\theta_i}$ are given by
\[ \frac{d\hat{\theta}}{d\theta_i} = \frac{Z_i}{L} \qquad \textrm{and} \qquad \frac{d\tilde{\sigma}}{d\theta_i} = Np(1-p)\left(\frac{2}{L} \theta_iZ_i - \frac{2}{L}\hat{\theta}Z_i\right) + \frac{dg}{d\theta_i},\]
the third line holds by lower bounding $\theta_i \geq 0$ and $\frac{dg}{d\theta_i} \geq 0$, and the last inequality holds by \eqref{eq: tilde CI condition}.

Combining \eqref{eq: tilde CI condition} and \eqref{eq: derivative bound} implies that $\frac{df}{d\theta_i} \geq 0$ for all $(\hat{\theta}, \tilde{\sigma}$ satisfying $\hat{\theta} \leq \hat{\theta}_Y$ and $\tilde{\sigma} \geq \tilde{\sigma}_Y$. Since these conditions both hold for $\{\theta(t): t \geq t_0\}$, it follows that $\nabla f(\theta(t)) \geq 0$ for all $t \in [t_0,1]$. 
\end{enumerate}

\end{enumerate}
This contradicts the existence of $\theta^* \neq Y$ such that $f(\theta^*) > f(Y)$, proving the theorem.

\end{proof}

\begin{proof}[Proof of Theorem \ref{th: checking variance}]
	By Chebychev's inequality, it holds that
	\[\mathbb{P}\left(|T| \geq \frac{\sqrt{\operatorname{Var}(T)}}{\sqrt{\alpha}} \right) \leq \alpha,\]
	or equivalently after dividing by $L$,
	\[\mathbb{P}\left(\frac{|T|}{L} \geq \frac{1}{\sqrt{\alpha}} \cdot \frac{\sqrt{\operatorname{Var}(T)}}{L} \right) \leq \alpha.\]
	Substituting $T = \sum_{i=1}^N (\bar{\theta}-\theta_i)$ and rearranging terms yields that with probability at least $1-\alpha$,
	\[\bar{\theta} \leq \hat{\theta} + \frac{1}{\sqrt{\alpha}} \frac{\sqrt{\operatorname{Var}(T)}}{L}.\]
	Substituting $\operatorname{Var}(T) \leq \hat{\sigma}_Y z_{1-\alpha}^2 \alpha$ (which is given by the statement of Theorem \ref{th: checking variance}) yields the desired result. 
\end{proof}

\section{Proof of Theorems \ref{th: CAT} and \ref{th: ZCAT}}

\begin{proof}[Proof of Theorem \ref{th: CAT}]
	By algebraic manipulation, $\Delta_\xi$ can be written as
	\begin{align*}
		\Delta_\xi & = \sum_{X_i=1} \frac{\xi_i}{N_1}  -  \sum_{X_i=0} \frac{\xi_i}{N_0}  \\
		& = \sum_{X_i=1} \xi_i\left(\frac{1}{N_1} + \frac{1}{N_0}\right) - \sum_{i=1}^N \frac{\xi_i}{N_0} \\
		& = \sum_{X_i=1} \xi_i \frac{N}{N_1N_0} - \frac{N}{N_0} \sum_{i=1}^N \frac{\xi_i}{N} \\
		& = \frac{N}{N_0}\cdot \frac{1}{N_1} \sum_{X_i=1} (\xi_i - \bar{\xi}),
	\end{align*}
	where $\bar{\xi} = \frac{1}{N}\sum_{i=1}^N \xi_i$.
	
	Let $\hat{\psi}$ denote the quantity
	\[ \hat{\psi} = \frac{1}{N_1} \sum_{i:X_i=1} (\xi_i - \bar{\xi}).\]
	By properties of sampling without replacement \citep{thompson2002sampling}, it holds that
	\[ \operatorname{Var}(\hat{\psi}) = \sigma^2 \frac{1}{N_1}\cdot\frac{N-N_1}{N},\]
	where $\sigma^2$ equals the variance of the entries of $\xi$,
	\[ \sigma^2 = \frac{1}{N_1-1}\sum_{i=1}^N (\xi_i - \bar{\xi})^2.\]
	Since $\xi$ is a binary vector, it holds that $\sigma^2 = \bar{\xi}(1-\bar{\xi}) \leq 1/4$.
	
	By a finite population central limit theorem \citep{thompson2002sampling}, it holds that $\frac{\hat{\psi}}{\sqrt{\operatorname{Var}(\hat{\psi})}}$ converges to a standard normal distribution as $\min(N_1,N_0) \rightarrow \infty$. This implies that
	\[ \mathbb{P}\left( \frac{\hat{\psi}}{\sqrt{\operatorname{Var}(\hat{\psi})}} \geq z_{1-\alpha}\right) = \alpha + o(1).\]
	Multiplying both sides of the inequality inside the probability expression by $\frac{N}{N_0}\sqrt{\operatorname{Var}(\hat{\psi})}$ yields
	\[ \mathbb{P}\left( \Delta_\xi \geq z_{1-\alpha} \frac{N}{N_0} \sqrt{\operatorname{Var}(\hat{\psi})}  \right) = \alpha + o(1),\]
	where we have used the identity that $\Delta_\xi = \frac{N}{N_1}\hat{\psi}$. Since $\sigma^2 \leq 1/4$, it follows that $\operatorname{Var}(\hat{\psi}) \leq \frac{1}{4}\frac{N-N_1}{N_1N}$, and by algebraic manipulation it holds that
	\[ \mathbb{P}\left( \Delta_\xi \geq \frac{z_{1-\alpha}}{2} \sqrt{\frac{N}{N_0N_1}} \right) \leq \alpha + o(1).\]
	Since $\tau^{\text{CAT}} = \Delta_Y - \Delta_\xi$, this implies the one-sided interval \eqref{eq: CAT one-sided}. The two-sided interval \eqref{eq: CAT two-sided} follows by parallel arguments.
\end{proof}

\begin{proof}[Proof of Theorem \ref{th: ZCAT}]
	Following the proof of Theorem \ref{th: CAT}, $\Delta_{\xi,Z}$ can be written as
		\begin{align*}
			\Delta_{\xi,Z} & = \frac{N}{(N-L)L}\cdot \sum_{Z_i=1} (\xi_i - \bar{\xi}) \\
			& = \frac{N}{(N-L)L} \cdot T(Z;\xi).
		\end{align*}
	Abbreviating $T \equiv T(Z;\xi)$, we observe that $\operatorname{Var}(T)$ equals
	\begin{align*}
	\operatorname{Var}(T) &= (\xi - \bar{\xi})^TP(\xi - \bar{\xi}) \\
	&= (\xi - \bar{\xi})^T\left(I - \frac{\obf\obf^T}{N}\right) P \left(I - \frac{\obf \obf^T}{N}\right) (\xi-\bar{\xi}),
	\end{align*}
	where we have used the identity $(\xi-\bar{\xi}) = (\xi - \bar{\xi})^T\left(I - \frac{\obf\obf^T}{N}\right)$, which holds because $(\xi-\bar{\xi})^T\obf = 0$. It thus follows that $\operatorname{Var}(T)$ can be upper bounded by
	\begin{align*}
	\operatorname{Var}(T) &\leq \|\xi - \bar{\xi}\|^2 \lambda_1 \\
	& \leq \frac{N}{4} \lambda_1
	\end{align*}
	where $\lambda_1$ is the largest eigenvalue of $\left(I - \frac{\obf\obf^T}{N}\right) P \left(I - \frac{\obf \obf^T}{N}\right)$, and the second inequality uses the fact that $\xi$ is a binary vector, so the variance of its entries is bounded by $1/4$.
	
	Under the conditions of the theorem, Part 2 of the proof of Theorem \ref{th: CLT} implies that $\frac{T}{\sqrt{\operatorname{Var}(T)}}$ converges to a standard normal distribution. By algebraic manipulation, it follows that
	\[ \mathbb{P}\left( \frac{N}{L(N-L)} T(Z;\xi) \geq z_{1-\alpha} \frac{N}{L(N-L)} \sqrt{\operatorname{Var}(T)}\right) = \alpha + o(1).\]
	Substituting $\Delta_{\xi,Z} = \frac{N}{L(N-L)}T(Z;\xi)$ and the upper bound $\operatorname{Var}(T) \leq N\lambda_1/4$ yields (after some algebraic manipulation) that
	\[ \mathbb{P}\left( \Delta_{\xi,Z} \geq \frac{z_{1-\alpha}}{2p(1-p)}\sqrt{\frac{\lambda_1}{N}} \right) \leq \alpha + o(1),\]
	where we have used the result $L = Np(1+o_P(1))$ which is \eqref{eq: mcdiarmid 1} in the proof of Lemma \ref{le: CLT helper mcdiarmid}. As $\tau^{Z\text{CAT}} = \Delta_Y - \Delta_{\xi}$, this proves the one-sided confidence interval. The two-sided interval is proven by parallel arguments.
\end{proof}

\end{document}